\documentclass[11pt]{article}

\usepackage[utf8]{inputenc}
\usepackage[T1]{fontenc}
\usepackage{lmodern}
\usepackage[margin=1in]{geometry}
\usepackage{amsmath,amssymb,amsthm}
\usepackage{graphicx}
\usepackage{booktabs}
\usepackage{float}
\usepackage{microtype}
\usepackage{xcolor}
\usepackage{siunitx}
\usepackage[hidelinks]{hyperref}
\usepackage{caption}
\usepackage{subcaption}
\usepackage{multirow}
\usepackage{booktabs}
\usepackage{tikz}
\usetikzlibrary{arrows.meta,positioning,fit,backgrounds,calc,shapes.geometric}
\usepackage{algorithm}
\usepackage{algpseudocode}
\usepackage{placeins}

\newtheorem{theorem}{Theorem}
\newtheorem{lemma}{Lemma}
\newtheorem{proposition}{Proposition}
\newtheorem{corollary}{Corollary}
\theoremstyle{definition}
\newtheorem{definition}{Definition}

\newcommand{\totalInjections}{3.8\,million}
\newcommand{\nScenes}{4}
\newcommand{\catThresh}{1}
\newcommand{\scalesSignFootMean}{10.3}
\newcommand{\scalesSignFootPNN}{75.7}
\newcommand{\gpuHours}{5.3 GPU-hours}
\newcommand{\meanUtil}{59}
\newcommand{\theorySlope}{0.94}
\newcommand{\guardMultiPSNRhi}{21.8}
\newcommand{\noguardMultiPSNRhi}{10.6}
\newcommand{\multiupsetKmax}{20{,}000}
\newcommand{\multiupsetKthirty}{1{,}000}
\newcommand{\guardCoverage}{90.4}
\newcommand{\guardBeforePSNR}{49.2}
\newcommand{\guardAfterPSNR}{65.8}
\newcommand{\guardWorstFoot}{11.68}
\newcommand{\guardResidCat}{470}
\newcommand{\guardNsites}{768{,}000}
\newcommand{\renderPeakMpix}{837}
\newcommand{\guardCostUs}{\SI{76}{\micro\second}}
\newcommand{\guardCostFrac}{0.07}
\newcommand{\cpuDays}{roughly 178 CPU-days}

\newcommand{\distMaxT}{64}
\newcommand{\distFracNg}{99.2}
\newcommand{\distFracG}{24.5}
\newcommand{\distIoU}{0.323}

\newcommand{\scalingNlo}{6{,}741}
\newcommand{\scalingNhi}{134{,}826}
\newcommand{\scalingKlo}{5{,}000}
\newcommand{\scalingKhi}{5{,}000}
\newcommand{\lFortySingleInj}{1{,}522}
\newcommand{\scaleFourAgg}{6{,}081}
\newcommand{\scaleFourSpeedup}{4.00}
\newcommand{\scaleFourEff}{100}
\newcommand{\scaleFourNodes}{4}
\newcommand{\scaleFourUtil}{98}
\newcommand{\mgpuFourWorld}{4}
\newcommand{\mgpuFourContamNg}{4}
\newcommand{\mgpuFourContamG}{1}
\newcommand{\mgpuWorld}{2}
\newcommand{\mgpuContamNg}{2}
\newcommand{\mgpuContamG}{1}
\newcommand{\mgpuTransferGbps}{2.6}
\newcommand{\mgpuRankMs}{1.74}
\newcommand{\mgpuFrameMs}{7.7}
\newcommand{\mgpuRenderW}{1600}
\newcommand{\accAlpha}{2.78}
\newcommand{\accRsq}{0.982}
\newcommand{\accMeanExp}{0.09}

\newcommand{\accScrubExp}{1.78}
\newcommand{\accGuardFactor}{24}
\newcommand{\accSamplesPerCell}{1.2}
\newcommand{\accTotalSamples}{19}
\newcommand{\accNlo}{5{,}000}
\newcommand{\accNhi}{300{,}000}
\newcommand{\batchInjPerSec}{2{,}515}
\newcommand{\batchUtil}{99}
\newcommand{\batchPower}{386}
\newcommand{\batchB}{32}
\newcommand{\batchGaussInst}{4.3}
\newcommand{\realName}{truck}
\newcommand{\realN}{2{,}056{,}645}
\newcommand{\realScaleFootPNN}{64.0}
\newcommand{\realScaleFootNg}{3.00}

\newcommand{\realCatNg}{0.50}
\newcommand{\realCatG}{0.000}
\newcommand{\maxStressN}{66\,million}
\newcommand{\vramMax}{29.9}
\newcommand{\guardMsBig}{9.66}
\newcommand{\mpixBig}{5}
\newcommand{\guardFracBig}{18.3}
\newcommand{\bigScaleFootNg}{0.0}
\newcommand{\bigScaleFootG}{99.58}
\newcommand{\bigParamBits}{124}
\newcommand{\guardBwBig}{763}
\newcommand{\stormK}{1{,}000}

\newcommand{\stormFrames}{300}
\newcommand{\stormLatNg}{20.5}
\newcommand{\stormLatG}{22.9}
\newcommand{\rankBarrierClean}{1.42}
\newcommand{\rankBarrierCorrupt}{1.34}
\newcommand{\rankBarrierGuard}{1.34}
\newcommand{\rankImbalCorrupt}{1.68}
\newcommand{\rankImbalGuard}{1.68}

\newcommand{\pcUpset}{0.634}
\newcommand{\pcGuard}{0.0612}
\newcommand{\modelBits}{255\times10^6}
\newcommand{\mtbfGroundNg}{71 yr}

\newcommand{\mtbfLeoNg}{3 d}

\newcommand{\predAUCcross}{0.999}
\newcommand{\predAUCcrossMin}{0.998}
\newcommand{\predAUC}{0.999}

\newcommand{\predAUCfieldbit}{0.997}
\newcommand{\predTopFeat}{bit}

\title{\textbf{Single-Event Upsets in 3D Gaussian Splatting Rendering:\\
Bit-Level Criticality, Spatial Extent, and a Parallel Support Guard}}

\author{Faruk Alpay\thanks{Correspondence: \texttt{alpay@lightcap.ai}} \quad Bar\i \c{s} Ba\c{s}aran\\
\small Department of Computer Engineering, Bah\c{c}e\c{s}ehir University, Istanbul, Turkey\\
\small \texttt{\{faruk.alpay, baris.basaran\}@bahcesehir.edu.tr}}

\date{}

\begin{document}
\maketitle

\begin{abstract}
Three-dimensional Gaussian splatting is a standard real-time scene
representation, and it is increasingly deployed on hardware that is exposed to
transient faults: spaceborne and avionic processors, mobile and robotic edge
devices, and rendering clusters in which silent data corruption is a documented
operational reality. A trained model is a large array of floating-point
parameters resident in GPU memory, so a single-event upset corresponds to a
single flipped bit in one of those values. This paper measures the effect of such
upsets on the rendered image and uses the result to construct a defense. A
GPU-resident parallel fault-injection engine applies more than
\totalInjections{} controlled single-bit upsets across four trained scenes, six
parameter fields, all bit positions, and three numeric formats (\texttt{fp32},
\texttt{fp16}, \texttt{bf16}), using \gpuHours{} of GPU time. The effect is
concentrated. Most single-bit upsets leave the image perceptually unchanged
because the representation is highly redundant, while a small set of high-order
bits, principally the sign bit of the logarithmic scale, enlarge a single
primitive until it covers up to \scalesSignFootPNN\% of the frame. A closed-form
perturbation bound derived from the IEEE-754 layout and the activation functions
of the splatting pipeline predicts the per-bit ordering and is validated against
the measurements. The concentration motivates a support guard: a per-primitive
clamp of each parameter to the coordinate box observed during training, at a cost
of \guardCostUs{} per frame. Over \guardNsites{} guarded upsets the worst observed
corruption footprint is \guardWorstFoot\% of the frame; we prove that the guard
leaves a clean model unchanged and that no single-bit upset can produce
frame-covering corruption under it. Under accumulated dose the unguarded renderer
degrades to \noguardMultiPSNRhi~\si{\decibel} at \multiupsetKmax{} simultaneous
upsets, whereas the guarded renderer remains at \guardMultiPSNRhi~\si{\decibel}.
The corruption footprint also gives the number of screen tiles, and therefore
compositing nodes, that a single upset contaminates in a distributed renderer,
where the per-node guard contains it.
\end{abstract}

\section{Introduction}
\label{sec:intro}

Real-time radiance-field rendering with 3D Gaussian splatting (3DGS)
\cite{kerbl2023gaussian} has moved quickly from a novel-view-synthesis result to
an infrastructure component. It underlies simultaneous localization and mapping
on robots \cite{matsuki2024gaussian}, it is being pushed onto mobile and embedded
accelerators, and it is rendered at scale in data centers. These deployments
share a property that is rarely considered in real-time rendering:
the hardware is not perfectly reliable. Cosmic-ray and alpha-particle strikes
flip bits in memory and registers \cite{normand1996single,baumann2005radiation},
and the rate is high enough that fleet operators now report silent data
corruption as a routine cause of wrong results
\cite{dixit2021silent,hochschild2021cores}. Graphics processors, the substrate
on which 3DGS runs, have been shown under accelerated neutron beams to be
particularly exposed \cite{oliveira2016evaluation,fratin2018code}.

A trained 3DGS model is a large array of floating-point numbers held in GPU
memory: for each primitive, a position, an anisotropic scale, an orientation
quaternion, an opacity, and a set of spherical-harmonic color coefficients. A
single-event upset (SEU) in that memory is, to first order, a single flipped bit
in one of those numbers. The corresponding question for deep network weights has
been studied extensively, and there a small number of well-chosen bit flips can
destroy a classifier \cite{rakin2019bitflip,liu2017fault,li2017understanding}.
The 3DGS case differs structurally: a network weight is shared across an entire
output, whereas a Gaussian primitive contributes to a small spatial region of one
image, and a scene contains hundreds of thousands of primitives. Whether this
redundancy makes the representation robust, or instead concentrates the risk into
a few rare failures, has not been established. This paper resolves it by direct
measurement and then derives a defense from the measured structure.

Estimating a per-bit, per-field catastrophe rate requires thousands of
independent samples per cell, and there are hundreds of cells per scene and
precision. The fault-injection engine is therefore designed for throughput rather
than latency: it renders a stream of independently corrupted models at hundreds of
corrupted renders per second, and the campaign produces more than
\totalInjections{} measured upsets. The same campaign on a single-threaded CPU
rasterizer is estimated at \cpuDays{}; on the GPU it used \gpuHours{}.

The contributions are as follows.
\begin{itemize}
  \item \textbf{Bit-level criticality.} Single-bit SEU risk in 3DGS is highly
  concentrated. The median upset is perceptually invisible, and the dominant case
  is the sign bit of the logarithmic scale: flipping it enlarges a single
  primitive to cover a mean of \scalesSignFootMean\% and up to
  \scalesSignFootPNN\% of the image (Section~\ref{sec:criticality}).
  \item \textbf{A predictive perturbation bound.} A closed-form bound from the
  IEEE-754 layout \cite{ieee754} and the splatting activations predicts the
  per-bit severity ordering, including why the scale and opacity activations make
  their sign and high exponent bits the dominant ones; it is validated against the
  data (Section~\ref{sec:theory}, Section~\ref{sec:results-theory}).
  \item \textbf{A support guard.} The catastrophic upsets are exactly those that
  push a value outside the trained parameter range. A per-primitive clamp to the
  per-field support box neutralizes \guardCoverage\% of them at a cost of
  \guardCostUs{} per frame, leaves a clean model unchanged, and is proved to make
  frame-covering corruption impossible under any single-bit upset
  (Theorem~\ref{thm:guard}, Theorem~\ref{thm:complete},
  Section~\ref{sec:results-guard}).
  \item \textbf{Accumulated dose and distributed rendering.} The unguarded
  representation absorbs many simultaneous upsets through redundancy but eventually
  degrades; with the guard the rendered quality stays close to clean across the
  tested dose range, up to \multiupsetKmax{} simultaneous upsets
  (Section~\ref{sec:results-multi}). In a distributed renderer the spatial extent
  of a corrupted primitive equals the number of screen tiles, and therefore
  compositing nodes, that a single upset contaminates, and the
  node-local guard contains it at the source (Section~\ref{sec:results-dist}).
  \item \textbf{Scaling, design space, reliability, and prediction.} Resilience to
  accumulated corruption grows with primitive count while per-upset severity does
  not (Section~\ref{sec:results-scaling}); the support guard matches
  error-correcting and duplication schemes at a fraction of their cost
  (Section~\ref{sec:results-defenses}); under representative orbital upset rates the
  estimated mean time between catastrophic frames moves from days to beyond mission
  duration once the guard is applied (Section~\ref{sec:results-survival}); and a
  classifier predicts catastrophic bits from the field and bit position alone at an
  area under the ROC curve of \predAUCfieldbit{}, so future models can be screened
  without an injection campaign (Section~\ref{sec:results-predict}).
\end{itemize}

We release the fault-injection engine, the full analysis, the aggregated
per-cell records, and the logs, with training and campaign code that regenerates
the models and the raw per-injection records exactly, so that the campaign can be
reproduced and extended.\footnote{Code, trained models, aggregated records, and
logs: \url{https://huggingface.co/datasets/Lightcap/seu-3dgs}}

\section{Background and Related Work}
\label{sec:related}

\paragraph{Soft errors and silent data corruption.}
A single-event upset is a transient change of stored state caused by ionizing
radiation \cite{normand1996single,baumann2005radiation,mukherjee2008architecture}.
At terrestrial scale the dominant source is the atmospheric neutron flux; in
space and avionics the flux is higher and the consequences are safety relevant.
Two industrial studies brought the topic into mainstream computing by showing
that production fleets experience silent, undetected wrong results at rates far
above what classical soft-error models predict, and that the corruptions are data
and core dependent \cite{dixit2021silent,hochschild2021cores}. Graphics
processors specifically have been characterized under accelerated neutron beams,
where both the error rate and the fraction of errors that survive to the output
depend strongly on the workload and the data type
\cite{oliveira2016evaluation,fratin2018code}. Our work is downstream of these:
we take the existence of bit flips as given and ask what a 3DGS renderer does
with one.

\paragraph{Fault injection for learned models.}
The methodology of flipping bits and measuring the output is established for deep
networks. Architecture-level GPU injectors such as SASSIFI \cite{hari2017sassifi}
and NVBitFI \cite{tsai2021nvbitfi} operate at the instruction level, while
application-level tools such as PyTorchFI \cite{mahmoud2020pytorchfi} and Ares
\cite{reagen2018ares} perturb tensors directly. The resilience picture for DNNs
is now well understood: error propagation is highly non-uniform across bits and
layers \cite{li2017understanding}, and adversarially chosen flips, the bit-flip
attack, can collapse accuracy with a few bits \cite{rakin2019bitflip,liu2017fault}.
We adopt the application-level methodology but apply it to a representation that
is not a neural network: an explicit, geometric, and heavily over-complete set of
primitives. The contrast with the DNN result is one of our findings rather than
an assumption.

\paragraph{Range-based detection and algorithm-based fault tolerance.}
The idea of catching faults by checking whether a value has left its expected
range is old in dependable computing, from algorithm-based fault tolerance for
linear algebra \cite{huang1984algorithm} to recent activation range supervision
for convolutional networks \cite{geissler2021range}, which clamps neuron outputs
to ranges profiled on clean data. Our support guard is structurally related but
acts on a different object: it clamps the \emph{geometric and appearance
parameters} of primitives to the box observed during training, exploiting the
fact that a trained 3DGS scene occupies a compact, bounded region of parameter
space. The bound it enforces is therefore a property of the optimized scene, not
of an activation distribution, and its correctness argument is geometric.

\paragraph{Gaussian splatting.}
3DGS \cite{kerbl2023gaussian} represents a scene as a set of anisotropic 3D
Gaussians with view-dependent color, rendered by projecting each Gaussian to
screen space and compositing front to back. It descends from neural radiance
fields \cite{mildenhall2020nerf} but replaces the implicit network with explicit
primitives, which is exactly what makes a bit-level criticality study tractable:
every bit belongs to a named, interpretable quantity. We use the open-source
\texttt{gsplat} rasterizer \cite{ye2024gsplat} for both training and the
campaign. We measure image fidelity with PSNR, SSIM \cite{wang2004image}, and the
learned perceptual metric LPIPS \cite{zhang2018unreasonable}.

\section{Fault Model and Method}
\label{sec:method}

\paragraph{Parameter layout.}
A trained model has $N$ primitives. Primitive $i$ carries a mean
$\mu_i\in\mathbb{R}^3$, a logarithmic scale $\mathbf{s}_i\in\mathbb{R}^3$ (the
renderer applies $\exp$), a quaternion $\mathbf{q}_i\in\mathbb{R}^4$ (normalized
at render time), an opacity logit $o_i\in\mathbb{R}$ (the renderer applies the
logistic $\sigma$), and spherical-harmonic color coefficients split into a
direct-current term $\mathbf{c}^{0}_i\in\mathbb{R}^3$ and higher-order terms
$\mathbf{c}^{N}_i\in\mathbb{R}^{(\ell+1)^2-1\times 3}$. These are exactly the six
fields a deployed checkpoint stores in memory. We inject into the stored
representation, which is the optimization-space value (logarithmic scale, opacity
logit), because that is what physically resides in VRAM.

\paragraph{Single-bit upset model.}
A fault site is a tuple (field, primitive, component, bit). To inject, we
reinterpret the stored value as an unsigned integer of the matching width, flip
the chosen bit, and reinterpret back. The IEEE-754 layout determines the bit
classes: for \texttt{fp32} the sign is bit 31, the exponent is bits 23 to 30, and
the mantissa is bits 0 to 22; \texttt{fp16} has a 5-bit exponent and a 10-bit
mantissa; \texttt{bf16} keeps the 8-bit exponent of \texttt{fp32} but only a
7-bit mantissa. For the reduced precisions we cast the clean parameter to the
target format, flip the bit there, and read it back, modeling a model deployed in
that format.

\paragraph{Metrics and severity.}
For each injection we render $K$ held-out views, composite over white, and
compare to the uncorrupted render at the same precision. We record the peak
pixel error $\|\Delta I\|_\infty$, the perceptual distance LPIPS, the structural
similarity SSIM, the PSNR over the full frame, and a non-finite flag. Because the
representation is redundant, full-frame PSNR is a poor discriminator: a single
corrupted primitive among hundreds of thousands barely moves a global average.
We therefore lead with two local quantities.

\begin{definition}[Corruption footprint]
\label{def:footprint}
The corruption footprint of an injection is the fraction of pixels whose color
changes by more than one 8-bit level ($1/255$) relative to the clean render,
averaged over the $K$ views.
\end{definition}

\begin{definition}[Catastrophic upset]
\label{def:catastrophe}
An upset is catastrophic if it produces a non-finite render or a corruption
footprint above \catThresh\% of the frame.
\end{definition}

\paragraph{Parallel injection engine.}
The clean parameters stay resident on the GPU. Each injection writes one
corrupted value into a working copy, renders the $K$ views in a single batched
rasterizer call, computes all metrics on the GPU, and restores the value. The
engine therefore keeps the device busy with a stream of independent corrupted
renders; its throughput and the way the rasterizer absorbs additional cameras are
reported in Section~\ref{sec:results-dist}. We sample thousands of sites per
(field, bit) cell so that catastrophe rates are estimable, and we sweep all
fields, all bit positions, four scenes, and three precisions.

\paragraph{Scenes and models.}
We train four scenes from the standard synthetic benchmark (chair, lego, ficus,
hotdog) with \texttt{gsplat} densification. Table~\ref{tab:scenes} reports their
size and clean fidelity; they span smooth surfaces, high-frequency geometry, thin
foliage, and specular materials, so the criticality conclusions are not tied to
one kind of content.

\begin{table}[H]
\centering
\caption{Trained scenes used in the campaign. \emph{Primitives} is the number of
3D Gaussians retained after densification; \emph{PSNR} (dB) and \emph{SSIM}
($\in[0,1]$) are the clean reconstruction fidelity measured on the held-out test
views, where higher is better. The four standard NeRF-synthetic scenes span a
range of primitive counts and geometry; every fault-injection result below
perturbs the stored parameters of these trained models.}
\label{tab:scenes}
\begin{tabular}{lrrr}
\toprule
Scene & Primitives & PSNR (dB) & SSIM \\
\midrule
chair & 134{,}826 & 39.22 & 0.9942 \\
lego & 121{,}691 & 30.70 & 0.9815 \\
ficus & 125{,}606 & 31.99 & 0.9855 \\
hotdog & 63{,}452 & 38.50 & 0.9874 \\
\bottomrule
\end{tabular}
\end{table}

\section{A Perturbation Bound for Single-Bit Upsets}
\label{sec:theory}

Before presenting the measurements, this section derives why the severity is
ordered as it is.
Let $\theta$ be a stored parameter with IEEE-754 sign $\epsilon$, unbiased
exponent $e$, and mantissa fraction $m\in[0,1)$, so that
$\theta=(-1)^{\epsilon}\,2^{e}(1+m)$. The rendered image $I$ is a differentiable
function of the activated parameters, and for a single primitive the dependence
is local. We bound the image perturbation caused by one flipped bit.

\begin{lemma}[Value perturbation of a single bit flip]
\label{lem:value}
Let $\theta=(-1)^{\epsilon}2^{e}(1+m)$ be a normal \texttt{fp32} value with
mantissa width $p=23$. Flipping mantissa bit $b\in\{0,\dots,p-1\}$ changes the
value by exactly
\[
  |\Delta\theta| \;=\; 2^{\,e+b-p},
\]
flipping the sign bit gives $\Delta\theta=-2\theta$, and flipping exponent bit
$j$ multiplies the magnitude by $2^{\pm 2^{j}}$, which for the high exponent bits
overflows to a non-finite value. The analogous statements hold for \texttt{fp16}
($p=10$) and \texttt{bf16} ($p=7$).
\end{lemma}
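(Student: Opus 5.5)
The argument works directly on the bit pattern of $\theta$. A normal \texttt{fp32} value is the word $(\epsilon, E, M)$ with biased exponent $E=e+127\in\{1,\dots,254\}$ and integer mantissa $M\in\{0,\dots,2^{p}-1\}$, so that $m=M/2^{p}$ and $\theta=(-1)^{\epsilon}2^{e}(1+M2^{-p})$. The three bit classes touch disjoint fields of this word, so a flip in one class leaves the other two unchanged. I will therefore treat the mantissa, the sign, and the exponent as three separate cases, each a short computation on the formula above.

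\emph{Mantissa bits.} Flipping bit $b$ of $M$ replaces $M$ by $M\pm 2^{b}$: the sign is $+$ if the bit was $0$ and $-$ if it was $1$. The result stays in $[0,2^{p})$, and $E$ and $\epsilon$ are unchanged. Hence the new value is still normal and
\[
|\Delta\theta| = 2^{e}\cdot 2^{b}\cdot 2^{-p} = 2^{\,e+b-p}
\]
exactly, with no rounding, because the operation is on the representation and not on arithmetic.

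\emph{Sign bit.} Flipping the sign gives $\theta\mapsto-\theta$, so $\Delta\theta=-2\theta$.

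\emph{Exponent bits.} Flipping exponent bit $j$ changes $E$ to $E\pm 2^{j}$, so the magnitude $2^{E-127}(1+m)$ is multiplied by $2^{\pm 2^{j}}$. This holds as long as the new biased exponent lies in $\{1,\dots,254\}$. The statement's ``overflows to a non-finite value'' needs care, and this is the main point where I would be precise rather than loose. The new $E$ can land on $255$, which the IEEE-754 layout decodes as $\pm\infty$ or NaN according to $M$. It can also land on $0$, which decodes as zero or a subnormal, so the claimed factor $2^{-2^{j}}$ fails there. I will state this edge case explicitly rather than leave the factor claim unconditional.

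For the top bit $j=7$ there is a clean observation. Trained parameters mostly satisfy $|\theta|<2$, i.e.\ $E\le127$, so bit $7$ is $0$ and the flip adds $128$. The result is $E+128\in[129,255]$: it is either the non-finite encoding $255$ (when $E=127$) or a magnitude scaled by $2^{128}$. That scaled magnitude overflows to $\infty$ as soon as it enters any downstream activation such as $\exp$, or any product. So ``overflows'' should be read as ``non-finite in the stored value or after activation''. I will phrase the claim in that hedged form. No earlier result is needed; the whole argument rests on the IEEE-754 layout from Section~\ref{sec:method}.

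Finally, the same three-case argument applies verbatim to \texttt{fp16} and \texttt{bf16}. Only the widths change: $p=10$ with a 5-bit exponent and bias $15$ for \texttt{fp16}, and $p=7$ with an 8-bit exponent and bias $127$ for \texttt{bf16}. I expect no obstacle beyond the boundary bookkeeping for the exponent field. The value $E=0$ is the one that most needs explicit mention, since it changes the encoding to subnormal and breaks the multiplicative formula.
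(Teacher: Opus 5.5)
Your proof is correct and follows the paper's own argument. You decode the word and observe that mantissa bit $b$ carries weight $2^{e+b-p}$ in $\theta$ while $E$ and the sign are untouched. The sign flip sends $\theta\mapsto-\theta$, and exponent bit $j$ shifts $E$ by $\pm 2^{j}$. Your explicit handling of the landing values $E=255$ ($\pm\infty$ or NaN) and $E=0$ (zero or subnormal) says the same thing as the paper's ``$\pm\infty$ or a subnormal underflow,'' only more carefully.

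One caution concerns the hedged reading you propose for the overflow clause, ``non-finite in the stored value or after activation.'' It is not true in general, for three reasons:
\begin{itemize}
\item The premise that trained values mostly have $|\theta|<2$ fails for the log-scale field itself. A typical $\theta\approx-3$ has $E=128$ with bit~7 set, so flipping that bit lands on $E=0$ and leaves a log scale near $0$, which is finite.
\item A huge \emph{negative} value sent through $\exp$ underflows to $0$ rather than overflowing.
\item The logistic $\sigma$ saturates at $0$ or $1$ rather than overflowing.
\end{itemize}
Keep the two-sided conditional form instead of collapsing the clause to overflow: the result is non-finite when $E+2^{j}$ reaches $255$, zero or subnormal when $E-2^{j}$ reaches $0$, and an exact factor $2^{\pm 2^{j}}$ otherwise.
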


The proof is a direct reading of the format and is given in
Appendix~\ref{app:proofs}. Lemma~\ref{lem:value} already predicts the central
qualitative fact: among mantissa bits the perturbation grows by a factor of two
per bit position, so only the top mantissa bits matter, while exponent and sign
bits produce order-of-magnitude or sign-reversing changes.

To turn a value perturbation into an image perturbation we use a first-order
bound. Write $I=R(\phi(\theta),\,\cdot\,)$ where $\phi$ is the per-field
activation ($\phi=\exp$ for scale, $\phi=\sigma$ for opacity, identity for mean
and color, normalization for the quaternion).

\begin{proposition}[First-order image perturbation]
\label{prop:image}
For a perturbation $\Delta\theta$ small enough that $R$ is locally linear,
\[
  \|\Delta I\|_\infty \;\le\; \Big\|\frac{\partial R}{\partial \phi}\Big\|_\infty
  \cdot |\phi'(\theta)|\cdot|\Delta\theta| \;+\; O(\Delta\theta^2).
\]
For the scale field, $\phi=\exp$ gives $|\phi'(\theta)|=e^{\theta}=s$, so the
relative scale change is $|\Delta s|/s=|\Delta\theta|$; a mantissa flip thus
changes the scale by a fixed relative amount $2^{\,b-p}$ independent of $e$, and
a sign flip on a typical trained value $\theta\approx-3$ changes the log scale to
$+3$, multiplying the primitive size by $e^{6}\approx 4\times10^{2}$.
\end{proposition}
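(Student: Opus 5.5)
The plan is to apply the chain rule to the composition $I=R(\phi(\theta),\cdot)$, viewed as a function of the single stored scalar $\theta$ with all other parameters held fixed, and then Taylor-expand. The composition $\theta\mapsto R(\phi(\theta),\cdot)$ is $C^2$ near $\theta$, because $\exp$, $\sigma$, the identity and quaternion normalization (away from the zero quaternion) are smooth. The rasterizer output $R$ is assumed locally smooth in the activated parameter, which is the "locally linear" hypothesis of the statement. Then each pixel satisfies
\[
  I(\theta+\Delta\theta)-I(\theta)=\frac{\partial R}{\partial \phi}\,\phi'(\theta)\,\Delta\theta+O(\Delta\theta^2),
\]
with a remainder controlled by the second derivative of the composition on the segment $[\theta,\theta+\Delta\theta]$. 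Taking absolute values pixelwise and then the maximum over pixels and channels gives the displayed bound. The factor $\|\partial R/\partial\phi\|_\infty$ is read as the largest pixelwise sensitivity, so $|\partial_\phi R_k|\le\|\partial R/\partial\phi\|_\infty$ holds for every output coordinate $k$. For vector-valued fields such as the quaternion or the mean, a single bit touches one component, so the same scalar argument applies to that component.

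The scale consequences are direct computations.
\begin{itemize}
\item With $\phi=\exp$ we have $\phi'(\theta)=e^\theta=s$. To first order $\Delta s=s\,\Delta\theta$, so $|\Delta s|/s=|\Delta\theta|$.
\item For a mantissa flip, Lemma~\ref{lem:value} gives $|\Delta\theta|=2^{e+b-p}$. Read as a relative change of the stored log-scale value, $|\Delta\theta|/|\theta|\approx 2^{b-p}$ up to the factor $(1+m)\in[1,2)$, which is independent of $e$. I will state this as the intended meaning of "fixed relative amount $2^{b-p}$", noting that $|\Delta\theta|=2^{b-p}\cdot 2^{e}$ and $|\theta|\sim 2^e$.
\item For the sign flip, Lemma~\ref{lem:value} gives $\theta\mapsto-\theta$. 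The exact new scale is then $e^{-\theta}=s\,e^{-2\theta}$; with $\theta\approx-3$ this is a factor $e^{6}\approx 403\approx4\times10^2$. This part needs no linearization, so I will compute it exactly rather than through the first-order formula.
\end{itemize}

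The main obstacle is the validity regime. The first-order bound only holds when $\Delta\theta$ is small, so sign and high-exponent flips lie outside it. For them I will point back to the exact value change from Lemma~\ref{lem:value} composed with the exact activation, rather than the Taylor term.

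A second subtlety is that $R$ is only piecewise smooth. Tile assignment, the $3\sigma$ footprint cutoff and the alpha threshold introduce discontinuities. I will handle this by assuming $\Delta\theta$ is small enough that no primitive crosses such a boundary, so that $R$ is smooth on the relevant segment. Making that hypothesis explicit is the step I expect to need the most care.
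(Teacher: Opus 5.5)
Your proof of the displayed bound and of the sign-flip claim is the paper's own argument. You Taylor-expand $\theta\mapsto R(\phi(\theta),\cdot)$ on the region where tile assignment and front-to-back order are fixed, take the $\infty$-norm, and compute $s\mapsto s\,e^{-2\theta}=s\,e^{6}$ exactly for $\theta\approx-3$. Your remark that an $O(\Delta\theta^2)$ remainder needs a $C^2$ composition, not mere differentiability, slightly tightens the paper's wording.

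The one real difference is the mantissa sub-claim, and there you are right not to follow the paper. The paper's proof substitutes Lemma~\ref{lem:value} into $|\Delta s|/s=|\Delta\theta|$ and writes $2^{b-p}$. The substitution actually gives $|\Delta s|/s=2^{e+b-p}$ to first order, which depends on $e$. For example, $\theta=-3=-1.5\cdot 2^{1}$ has $e=1$. Flipping its top mantissa bit ($b=22$) sends $\theta$ to $-2$, a first-order relative scale change of $1$, not $2^{-1}$. So the literal claim that a mantissa flip ``changes the scale by a fixed relative amount $2^{b-p}$ independent of $e$'' cannot be proved, and the paper's proof does not prove it.

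Your replacement, $|\Delta\theta|/|\theta|=2^{b-p}/(1+m)$, is a true, $e$-independent statement. However, it is about the \emph{log}-scale, not about $s$. Present it as a correction of the claim rather than as its ``intended meaning'', and state explicitly that the relative change of $s$ itself is $2^{e+b-p}$. It is also worth noting what is genuinely $e$-independent and what Section~\ref{sec:results-theory} actually tests: the factor of two between adjacent mantissa bits, i.e.\ the unit slope in $\log_2$. That holds under either reading, so the downstream validation is unaffected.
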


Proposition~\ref{prop:image} accounts for the dominant empirical effect:
the scale sign bit is the most damaging not because of the floating-point format
alone but because the $\exp$ activation maps a sign flip on a moderately negative
log scale to a large multiplicative change in the primitive's spatial extent. The opacity logit behaves similarly through $\sigma$, but its effect is
bounded in $[0,1]$ and therefore far milder. Section~\ref{sec:results-theory}
checks the mantissa-bit scaling quantitatively.

The following property justifies the defense.

\begin{theorem}[Support-guard safety]
\label{thm:guard}
Let $B_f=[\ell_f,h_f]$ be the per-component coordinate box of field $f$ over a
clean trained model, and let the guard map each stored component to its clamp
into $B_f$ (with non-finite values mapped into $B_f$). Then (i) the guard is the
identity on every clean model, so it never reduces uncorrupted fidelity; and
(ii) after a single-bit upset, the guarded value differs from the clean value by
at most the in-box spread of that component, so any upset whose only effect was
to leave $B_f$ is fully corrected and the residual error is bounded by an
in-support perturbation.
\end{theorem}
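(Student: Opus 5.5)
The plan is to argue componentwise, because the guard acts on each stored scalar independently. Fix a field $f$, a primitive $i$, and a component $k$. Write $x$ for the clean stored value and $G(y)=\min(\max(y,\ell_{f,k}),h_{f,k})$ for the guard, with non-finite $y$ sent to some fixed point of $B_f$; any convention will do, for instance $+\infty\mapsto h$, $-\infty\mapsto\ell$ and $\mathrm{NaN}\mapsto$ the midpoint. Everything in the theorem then follows from two elementary facts about this scalar map, applied coordinate by coordinate.

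For (i), the key observation is that $B_f$ is defined as the per-component minimum and maximum over the clean model itself. So every clean stored value satisfies $\ell_{f,k}\le x\le h_{f,k}$, and in particular is finite, since the box is computed from finite trained parameters. On $[\ell,h]$ the clamp is the identity, so the guarded model equals the clean model bit for bit. The renderer $R$ is a deterministic function of the stored parameters, so the rendered image is also identical. Hence no uncorrupted fidelity metric can change.

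For (ii), take the corrupted value $y$ obtained from $x$ by a single bit flip in the sense of Lemma~\ref{lem:value}. I would split into cases.
\begin{itemize}
\item If $y$ is finite and lies in $[\ell,h]$, then $G(y)=y$. Both $y$ and $x$ lie in the interval, so $|G(y)-x|\le h-\ell$.
\item If $y$ is finite but outside the box, then $G(y)$ is an endpoint. Both $G(y)$ and $x$ lie in $[\ell,h]$, so again $|G(y)-x|\le h-\ell$.
\item If $y$ is non-finite, which covers the exponent-overflow case of Lemma~\ref{lem:value}, then $G(y)$ lies in the box by construction and the same bound holds.
\end{itemize}
In all cases the guarded value is a point of the trained support, and its distance from the clean value is at most the in-box spread $h_{f,k}-\ell_{f,k}$. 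A single upset touches exactly one scalar, so all other components are untouched. The guarded corrupted model is therefore a clean model with one coordinate moved to another in-support value. That is what the theorem means by ``bounded by an in-support perturbation.''

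The only point needing care is the phrase ``fully corrected.'' Clamping maps an out-of-box $y$ to an endpoint, not back to $x$. So full correction should be read as ``the out-of-support excess is removed.'' Precisely, the guarded error satisfies $|G(y)-x|\le\min(|y-x|,\,h-\ell)$, which follows from the fact that clamping to a convex set containing $x$ is $1$-Lipschitz and fixes $x$. Any remaining error is the part already inside the box.

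I expect this interpretive step to be the main obstacle: stating precisely what ``corrected'' means, and checking the non-finite convention. The NaN case deserves explicit treatment because comparisons with NaN fail, so a plain min/max does not place NaN in $B_f$. The proof should state that NaN is mapped into the box explicitly, as the hypothesis of the theorem allows.

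Finally, for the consequence used later, I would combine the bound with Proposition~\ref{prop:image} through the activation: for the scale field, $\exp(G(y))\le e^{h}$. This caps the primitive's extent at the largest trained scale, which is what prevents frame-covering growth.
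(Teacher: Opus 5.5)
Your proof is correct and follows the paper's argument. For (i), the clamp is the identity because $B_f$ is the min--max box of the clean model itself. For (ii), after the upset both the guarded value and the clean value lie in $[\ell,h]$, which gives the spread bound, and in the in-box case the original perturbation passes through unchanged; the paper closes that case by citing Lemma~\ref{lem:value} for the mantissa bound $2^{e+b-p}$, which your estimate $|G(y)-x|\le\min(|y-x|,h-\ell)$ subsumes. Your explicit NaN convention, and your reading of ``fully corrected'' as removing only the out-of-support excess rather than restoring $x$, are accurate refinements of what the paper states more loosely.
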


The proof is in Appendix~\ref{app:proofs}. Theorem~\ref{thm:guard} ensures that
out-of-range flips are corrected on clean inputs, but it does not bound the worst
case. The next theorem provides that bound.

\begin{theorem}[No catastrophic corruption under guarding]
\label{thm:complete}
Render any model whose stored parameters have each been clamped into the trained
box $B_f$, after an arbitrary single-bit upset. Then every primitive that
contributes to the image has all of its parameters inside $B_f$. In particular its
projected scale is at most the largest trained scale, so its screen footprint is
at most that of the largest legitimately trained primitive, and no single-bit
upset can produce a primitive whose footprint exceeds that bound. Frame-covering
corruption, the only failure mode that the redundancy of the representation cannot
absorb, is therefore impossible under the guard.
\end{theorem}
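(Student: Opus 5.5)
The argument rests on the order of operations: the upset acts on stored memory, and the guard is applied at render time, after the upset and before rasterization. I will first fix this model precisely. The renderer reads the stored array $\tilde\theta$, which may contain one flipped bit, and forms $g(\tilde\theta)=\Pi_{B_f}(\tilde\theta)$ componentwise. Here $\Pi_{B_f}$ is the clamp $x\mapsto\min(\max(x,\ell_f),h_f)$, extended to send $\pm\infty$ to the nearest endpoint and NaN to a fixed in-box value, as in Theorem~\ref{thm:guard}. Only $g(\tilde\theta)$ reaches the activations and the rasterizer.

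The first claim is that every contributing primitive has all of its parameters inside $B_f$. This holds by construction, since the image of $\Pi_{B_f}$ is $[\ell_f,h_f]$ for every real or non-finite input. It does not depend on how many bits were flipped, on which field the upset hit, or on the bit position, so Lemma~\ref{lem:value} is not even needed for containment. It only explains why exponent and sign flips would otherwise escape the box.

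The second step transfers the parameter bound to a screen-footprint bound. For the scale field, $\exp$ is monotone, so every guarded axis satisfies $\exp(\tilde s_{i,k})\le \exp(h_{\mathrm{scale}})=:s_{\max}$. This is the largest activated scale that occurs in the trained model on that axis, because $h_{\mathrm{scale}}$ is the maximum attained by some clean primitive. For the projected footprint I will argue via the 2D covariance $\Sigma'=JW\Sigma W^\top J^\top$ with $\Sigma=R\,S^2R^\top$. Its largest eigenvalue is at most $\|J\|^2\,s_{\max}^2$, because rotation by the normalized quaternion does not change the spectrum of $S^2$, and the clamped quaternion is renormalized (a zero quaternion is excluded by a small guard on the norm). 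The Jacobian $J$ of the perspective projection depends on the depth of $\mu_i$. Since $\mu_i$ is clamped into the trained mean box, $J$ is bounded by its supremum over that box. Opacity lies in $[\sigma(\ell_o),\sigma(h_o)]\subseteq[0,1]$ and color is bounded, so the per-pixel contribution is bounded as well, and the $3\sigma$ cutoff radius, and hence the tile count, is bounded by its value at $s_{\max}$.

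The main obstacle is the phrase ``at most that of the largest legitimately trained primitive''. Per-component clamping produces a product box, so a guarded primitive could combine the maximal scale on all three axes with the nearest admissible depth. That combination need not be realized by any single clean primitive. I will therefore state the bound as the footprint of a primitive with scale $s_{\max}$ at the minimal in-box camera depth. A single-bit upset changes only one component, so the other parameters remain at their clean values. The corrupted primitive then differs from its clean self in one clamped coordinate, and its footprint is at most that of the clean primitive with that one axis enlarged to $s_{\max}$, which is a legitimate trained extent. I will also note the degenerate case of a camera near or inside the box, where $J$ blows up; it must be excluded by assuming that the viewing cameras lie at positive distance from the mean box. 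The conclusion ``frame-covering corruption is impossible'' then follows whenever this bound is below full-frame size. This comparison is a property of the trained scene and the cameras rather than of the proof, and I would state it explicitly as the hypothesis under which the final sentence of the theorem holds.
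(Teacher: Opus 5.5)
Your proposal is correct and shares the paper's skeleton. Containment holds by construction, because the clamp's range is $B_f$ for every input; the paper says the same, ``regardless of how many bits were flipped''. The bound is then pushed through the monotone activation, $s=\exp(\theta_{\mathrm{scale}})\le\exp(h)=s_{\max}$.

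Where you depart from the paper is the step from parameters to pixels. The paper does this in one line: screen extent is ``a monotone function of its scale'', so every guarded footprint is bounded by that of the largest trained primitive, ``which for a converged scene is a small fraction of the frame.'' You instead bound the projected covariance, $\lambda_{\max}(JWRS^2R^\top W^\top J^\top)\le\|J\|^2 s_{\max}^2$. This makes explicit that the footprint depends jointly on scale and on depth through $J$, and it exposes two things the paper's proof skips.

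\begin{itemize}
\item A primitive with in-box scale can sit at a nearer in-box depth than the largest trained primitive. A mean-field flip that stays inside the mean box can also move a primitive toward the camera, a case the paper's scale-only argument never treats. So ``at most that of the largest legitimately trained primitive'' does not follow as written.
\item The final sentence, that frame-covering corruption is impossible, needs the resulting worst-case bound to be below full frame. The paper asserts this as a property of converged scenes rather than deriving it.
\end{itemize}

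Your route therefore proves a properly qualified statement: the bound is a primitive of scale $s_{\max}$ at the minimal in-box depth, assuming the cameras are at positive distance from the mean box and this bound is sub-frame. The cost is that it does not recover the literal comparison to a trained primitive. The paper's version reads more strongly, but it holds only under those same assumptions, left implicit.

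One small correction. Your one-coordinate refinement, ``the clean primitive with that one axis enlarged to $s_{\max}$'', applies only to scale-field flips. That enlarged primitive is also not itself a trained one, so calling it ``a legitimate trained extent'' overstates it. Quaternion flips can rotate a thin ellipsoid into a larger projection, and mean flips change depth. Those cases must fall back on your general $s_{\max}$-at-minimal-depth bound, which you already have, so no gap results.
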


The proof is in Appendix~\ref{app:proofs}. Together the two theorems say that the
guarded renderer is correct on clean inputs and cannot be driven into a
catastrophic state by any single-bit upset; what remains is the in-support
mantissa residual, which Lemma~\ref{lem:value} bounds and
Section~\ref{sec:results-guard} measures to be perceptually negligible. The
catastrophic class is therefore eliminated, and only the bounded in-support
residual remains.

\begin{corollary}[Adversarial bit-flip bound]
\label{cor:adv}
The guarantee is worst-case, not average-case, so it holds against an adversary
that chooses which bits to flip. Under the guard, an adversary flipping $m$ bits
corrupts at most $m$ primitives, each confined to the trained box, so the total
contaminated screen area is at most $m$ times the footprint of the largest trained
primitive. Producing frame-covering corruption therefore requires a number of
coordinated flips inversely proportional to that single-primitive footprint,
rather than the single flip that suffices without the guard.
\end{corollary}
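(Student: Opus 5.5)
The plan is to reduce Corollary~\ref{cor:adv} to Theorem~\ref{thm:complete} applied one flip at a time, plus a union bound over screen area. I would first fix the adversary model. The adversary chooses $m$ fault sites (field, primitive, component, bit) arbitrarily and possibly adaptively, all flips land in stored memory, and the guard clamps every component into $B_f$ before rendering. Because the guard is a componentwise map applied after all flips, the order and correlation of the flips are irrelevant. This is what makes the statement worst-case: nothing in the argument will average over sites.

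The first step is a counting observation. Each flip changes exactly one stored component, so it touches at most one primitive. Hence at most $m$ primitives differ from the clean model, and several flips on the same primitive only lower that count. Every other primitive is identical to its clean version. Part (i) of Theorem~\ref{thm:guard} says the guard is the identity on clean values, so those primitives render exactly as in the clean model.

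The second step bounds each of the (at most $m$) corrupted primitives. After clamping, all of its components lie in the boxes $B_f$. The argument of Theorem~\ref{thm:complete} uses only that the guarded parameters lie in $B_f$, not that a single bit was flipped, so it applies verbatim to a primitive carrying several flips. Its projected scale is therefore at most the largest trained scale, and its screen footprint is at most $A_{\max}$, the footprint of the largest legitimately trained primitive.

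The third step turns these per-primitive bounds into an area bound. I would argue that a pixel whose color changes must lie in the screen support of some primitive whose parameters changed, since alpha compositing only propagates a change through pixels that primitive covers. Occlusion and transmittance effects on primitives behind it are confined to that same support. A union bound then gives contaminated area $\le m\,A_{\max}$. It follows that reaching a frame fraction $F$ (e.g.\ full coverage) requires $m \ge F/A_{\max}$, which is the claimed inverse proportionality. I would contrast this with the unguarded case, where Proposition~\ref{prop:image} and the scale sign-bit example show that one flip already achieves near-frame coverage.

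The main obstacle is making ``screen footprint'' rigorous, because a Gaussian has unbounded support. I would first adopt the renderer's truncation: gsplat culls beyond a $3\sigma$ extent and applies an alpha threshold of $1/255$. That gives a compact screen support monotone in scale, and it lets $A_{\max}$ be defined at the same one-8-bit-level threshold used in Definition~\ref{def:footprint}. A second subtlety is that a clamped mean inside the box can move the primitive closer to the camera, enlarging its projection. To handle this I would define $A_{\max}$ as a supremum over the box (largest scale at the nearest admissible depth), not over the trained primitives themselves, noting that this is the reading under which Theorem~\ref{thm:complete} is also sound.
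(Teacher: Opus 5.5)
Your proposal is correct and follows essentially the paper's proof: at most $m$ primitives are touched, Theorem~\ref{thm:complete} is applied to each guarded primitive regardless of how many of its bits were flipped, and a union bound over screen area gives $m\ge F/A_{\max}$. You add rigor the paper omits, namely the truncated-support definition of footprint and $A_{\max}$ taken as a supremum over the box, which covers a mean clamped toward the camera and several scale components pushed jointly to their per-component maxima, at the price that $A_{\max}$ can exceed the literal footprint of the largest trained primitive.

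One slip, which the paper shares: a primitive whose mean or orientation changes also alters pixels in its clean support, so the correct bound is $2mA_{\max}$ rather than $mA_{\max}$; this leaves the inverse proportionality intact.
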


This converts the bit-flip attack threat model
\cite{rakin2019bitflip}, under which a few adversarial flips destroy a network,
into one where the adversary must scale its effort linearly with the screen area
it wishes to corrupt.

We close the analysis with a scaling law that turns the redundancy we have been
describing into a quantitative budget, and through it into a schedule for
distributed scrubbing. Its exponent is obtained empirically in
Section~\ref{sec:results-accum} from a large batched campaign rather than posited.

\begin{lemma}[Redundancy scaling of single-upset error]
\label{lem:redundancy}
For a converged model of $N$ primitives, the typical (median) image error of one
uniform-random single-bit upset scales as $\sigma^2(N)=\Theta(N^{-\alpha})$ for a
scene-dependent exponent $\alpha>0$: as primitives are added to represent the same
scene they shrink, each accounting for a vanishing share of the image, so a typical
corrupted primitive perturbs a vanishing fraction of the pixels. The \emph{mean}
error does not share this scaling, because it is dominated by the rare scale-sign
explosions whose footprint does not shrink with $N$; redundancy thus protects the
typical upset but not the worst case, which is precisely the role of the guard.
\end{lemma}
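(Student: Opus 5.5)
The plan is to split the error of a uniform-random single-bit upset according to the bit class given by Lemma~\ref{lem:value}, and to treat the \emph{typical} upset separately from the rare scale-sign explosions. Throughout we model a converged scene of $N$ primitives as a fixed image area $A$ covered with bounded overlap multiplicity. The model makes the mean screen footprint of a primitive $a(N)=\Theta(A/N^{\beta})$ for some $\beta>0$ that depends on the scene's surface dimension and the camera distribution. For a surface-like scene we expect $\beta\approx 1$, which is why the exponent has to be left scene-dependent and measured rather than derived.

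First, we characterize the median upset. A uniform-random site is, with probability bounded away from zero, a low-mantissa bit in any field. It is also at a non-sign bit with probability at least $1-1/32$. By Lemma~\ref{lem:value} and Proposition~\ref{prop:image}, such an upset produces a perturbation confined to the support of one primitive. Its per-pixel amplitude is at most $\|\partial R/\partial\phi\|_\infty|\phi'(\theta)||\Delta\theta|$, and this bound does not grow with $N$: opacities and colours remain $O(1)$, and the relative scale change is $2^{b-p}$. Squaring this bounded per-pixel error and averaging it over the frame gives $\sigma^2\le C\,a(N)/A$. Hence the median is $O(N^{-\beta})$. For the matching lower bound, note that a constant fraction of sites (top mantissa bits of colour or opacity) change some pixel by at least a constant amount over a constant fraction of the primitive's footprint. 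This makes the median $\Omega(N^{-\beta})$, and so we take $\alpha=\beta$.

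Second, we show that the mean does not inherit this scaling. With probability about $1/(32\cdot\text{fields})$, the site is the sign bit of a log-scale component. By Proposition~\ref{prop:image}, this multiplies the primitive's extent by $e^{2|\theta|}$. The trained log scales become more negative as $N$ grows, roughly as $\log a(N)$. The blown-up footprint therefore scales as $a(N)\,e^{2|\theta|}\sim a(N)\cdot a(N)^{-1}$ in area, and it saturates at a frame fraction bounded below independently of $N$. This probability-$\Theta(1)$ event with $\Theta(1)$ error dominates the expectation, so the mean error is $\Theta(1)$ while the median is $\Theta(N^{-\alpha})$.

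The main obstacle is making the area model rigorous. Both "primitives shrink as $\Theta(N^{-\beta})$" and "sign-flipped log scale cancels the shrinkage" are properties of the optimizer's output rather than consequences of the renderer. I would state them explicitly as assumptions on converged models and prove the lemma conditional on them. The exponent $\alpha$ would then be fit empirically in Section~\ref{sec:results-accum}, as the paper indicates.
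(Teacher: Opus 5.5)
The upper bound on the median and the argument that the mean stays at $\Theta(1)$ match the paper's proof in substance. Both use bounded overdraw, a per-upset error of at most $f_i\Delta_{\max}^2$ from Proposition~\ref{prop:image} for non-catastrophic upsets, Markov's inequality for the median footprint, and a constant-probability scale-sign event with $\Theta(1)$ footprint. The paper splits on the event that the corrupted value stays in the trained box rather than on bit class. That is safer than your ``non-sign'' class, which includes the high exponent bits, and those are not confined to one primitive's footprint. For the median upper bound it is enough to use a bounded-amplitude class carrying more than half the probability, such as the \texttt{fp32} mantissa bits.

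The genuine gap is the matching lower bound $\Omega(N^{-\beta})$, and it fails for two reasons.

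\emph{First}, showing that a constant fraction $\gamma$ of sites (top mantissa bits of colour or opacity) have error $\Omega(f_i)$ says nothing about the median unless $\gamma>1/2$. Those bits are well under half of the stored bits. The median site is a middle mantissa bit, whose amplitude you have only bounded from above.

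\emph{Second}, even on those sites the error is $\Omega(f_i)$ for the struck primitive, and a uniformly drawn primitive has the \emph{median} footprint, not the mean footprint $a(N)$. Your own model rules out the identification $\alpha=\beta$. Let $a_i$ be the footprint of primitive $i$. A fixed covered area with overlap multiplicity at most $\bar C$ gives $A\le\sum_i a_i\le\bar C A$, hence $a(N)=\Theta(A/N)$ and $\beta=1$ for every scene. Your argument would therefore fix $\alpha=1$, which contradicts the scene dependence in the statement and the fitted median exponent of $2.78$ in Section~\ref{sec:results-accum}. Markov's inequality only says the median footprint decays at least as fast as the mean. It can decay much faster when the size distribution is heavy-tailed, which is what the measured exponent indicates.

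The paper never ties $\alpha$ to the mean. It works with the ordered footprints $f_{(1)}\le\dots\le f_{(N)}$ and uses the overdraw bound only for the Markov upper bound. It then posits $f_{(\lceil N/2\rceil)}=\Theta(N^{-\alpha})$ directly from how densification refines the tiling. Its accompanying remark that $\alpha\le 1$ is itself at odds with that Markov step, though the measured value is not.

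To repair your proof, replace the $\Omega(N^{-\beta})$ step with an explicit assumption on the median footprint. Add an $N$-independent distribution of per-bit amplitudes, so that the median of the product (footprint times amplitude) inherits the median footprint's exponent.
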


\begin{theorem}[Dose budget]
\label{thm:dose}
Under $k$ independent uniform single-bit upsets, the expected squared image error in
the non-catastrophic regime is additive, $\mathbb{E}\lVert\Delta I\rVert^2=k\,\sigma^2(N)$,
because the per-upset perturbations have disjoint expected support and zero mean
cross terms. Hence the dose $k_\tau$ at which the error first exceeds a threshold
$\tau$ scales as $k_\tau=\Theta(N^{\alpha}\tau)$. The support guard removes the
heavy-tailed catastrophic upsets, so under it the additive law holds up to the
bounded in-support residual of Lemma~\ref{lem:value} and the budget scales as
$\Theta(N^{\alpha}\tau)$; without the guard the catastrophic tail caps the expected
error independently of $N$, so it is the guard that makes the redundancy budget
grow with model size.
\end{theorem}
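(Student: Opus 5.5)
The plan is to treat the corrupted image as a sum of per-upset perturbations and to expand its squared norm. Let the $k$ upsets be independent, each drawn uniformly over fault sites. Write $\Delta I_j$ for the image change that upset $j$ would cause acting alone. The first step is to linearize, using Proposition~\ref{prop:image}, in the non-catastrophic regime: there each corrupted value stays close enough to its clean value that $R$ is locally linear. Two upsets that hit distinct primitives then compose additively up to second-order interaction terms, so $\Delta I=\sum_j \Delta I_j + r$ with $\|r\|$ of higher order. Expanding,
\[
  \mathbb{E}\lVert\Delta I\rVert^2=\sum_j \mathbb{E}\lVert\Delta I_j\rVert^2+\sum_{j\ne j'}\mathbb{E}\langle \Delta I_j,\Delta I_{j'}\rangle+\mathbb{E}\lVert r\rVert^2+\text{cross terms with } r .
\]
The diagonal sum is $k\,\sigma^2(N)$ by the definition of $\sigma^2(N)$ in Lemma~\ref{lem:redundancy}.

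The second step is to kill the off-diagonal terms. By independence, $\mathbb{E}\langle\Delta I_j,\Delta I_{j'}\rangle=\langle\mathbb{E}\Delta I_j,\mathbb{E}\Delta I_{j'}\rangle$. I would argue this is negligible in one of two ways. The first is a symmetry argument: mantissa flips are sign-symmetric in expectation, since a flip of bit $b$ adds or subtracts $2^{e+b-p}$ according to the current bit value by Lemma~\ref{lem:value}, so $\mathbb{E}\Delta I_j\approx 0$ to first order. The second, as a fallback, is the disjoint-support observation: two random primitives overlap in screen space with probability $O(\text{footprint})=O(N^{-\alpha})$, which makes the cross sum $O(k^2 N^{-2\alpha})$. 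That term is dominated by $kN^{-\alpha}$ precisely in the regime $k\ll N^{\alpha}$ where the threshold is crossed. Collisions where two upsets hit the same primitive occur with probability $O(k^2/N)$ and I would absorb them the same way.

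The budget then follows. The law $k\,\sigma^2(N)=\tau$ gives $k_\tau=\tau/\sigma^2(N)=\Theta(N^{\alpha}\tau)$, using $\sigma^2=\Theta(N^{-\alpha})$.

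For the guarded case, I would invoke Theorems~\ref{thm:guard} and~\ref{thm:complete}. After clamping, every corrupted value lies in $B_f$ and differs from its clean value by at most the in-box spread. The heavy tail is therefore truncated, the per-upset second moment is finite and of the same order as the typical in-support residual bounded through Lemma~\ref{lem:value}, and the additive argument above applies verbatim. Without the guard, a single scale-sign explosion contributes a footprint that is $\Theta(1)$ independent of $N$ (Lemma~\ref{lem:redundancy}). Such explosions occur with a fixed per-upset probability $p_c$, so the expected error picks up a term $p_c k\cdot\Theta(1)$ that does not shrink with $N$. That term saturates the image error and caps the budget at $k=\Theta(\tau/p_c)$ regardless of $N$.

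The main obstacle is making the linearization and cancellation rigorous. The claim that $\mathbb{E}\Delta I_j=0$ is only approximate: exponent flips and the nonlinear compositing (front-to-back alpha, occlusion) give biased perturbations, and the ``non-catastrophic regime'' must be defined so that conditioning on it does not break independence. My first attempt would be to condition on the event $E$ that no upset leaves the support. I would then bound cross terms by footprint overlap probability, rather than relying on zero mean, and accept the result as a $\Theta$ statement valid for $k=o(N^{\alpha})$.
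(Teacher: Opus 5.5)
Your proposal is essentially the paper's proof: split $\Delta I$ into per-upset perturbations, dispose of the cross terms using independence and the fact that, with high probability, distinct upsets hit distinct primitives with disjoint screen supports, sum the diagonal to $k\,\sigma^2(N)$ and invert, and let the guard restore disjointness by removing the frame-covering upsets; your explicit $O(k^2N^{-2\alpha})$ bound on the cross sum (valid while $k=o(N^{\alpha})$, i.e.\ for small $\tau$) and your $\Theta(\tau/p_c)$ unguarded cap make quantitative what the paper only asserts. Two points still need tightening, and the paper passes over the second one as well: first, the diagonal is $k$ times the \emph{mean} per-upset error, whereas Lemma~\ref{lem:redundancy} defines $\sigma^2(N)$ as the median, so the identification with $k\,\sigma^2(N)$ rests on the conditional-mean claim $\mathbb{E}[\varepsilon\mid G]=\Theta(N^{-\alpha})$ at the end of that lemma's proof, not on the definition; second, under the guard a clamped upset lands on the boundary of $B_f$ (a typical sign-flipped log-scale is sent to $h_f$), so its error is bounded only by the largest-primitive footprint of Theorem~\ref{thm:complete}, not by a typical in-support residual, and the guarded $\Theta(N^{\alpha}\tau)$ budget tacitly needs that contribution to be $O(N^{-\alpha})$ too, whereas the lemma's proof only gives $f_{(N)}\to0$.
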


\begin{corollary}[Distributed scrub schedule]
\label{cor:scrub}
Consider a $P$-node renderer in which each node re-applies the guard, which is
node-local and communication-free (Algorithm~\ref{alg:guard}), every $M$ frames,
while bits flip at rate $\lambda$ per stored bit per frame. With $b=\Theta(N)$
stored bits the expected dose accumulated between scrubs is $\lambda b M$, and
keeping it below $k_\tau$ requires $M\le k_\tau/(\lambda b)=\Theta(N^{\alpha-1}\tau/\lambda)$.
The guard's per-node work $O(N/P)$ is amortised over $M$ frames, so its
steady-state overhead is $O\!\left(N/(PM)\right)$ and falls as either the cluster
or the scrub interval grows.
\end{corollary}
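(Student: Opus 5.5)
The plan is to treat Corollary~\ref{cor:scrub} as a direct consequence of Theorem~\ref{thm:dose}, combined with a counting argument for the expected dose and a cost accounting for the guard. I would carry out three steps in order.

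\textbf{Step 1: accumulated dose.} Let each of the $b$ stored bits flip independently with probability $\lambda$ per frame. Over a window of $M$ frames between two guard applications, linearity of expectation gives an expected number of upsets of $\lambda b M$. This is exact and needs no independence across bits. For $\lambda bM\ll b$, multiple hits on the same bit have negligible probability. So the window's corruption is, to first order, $k\approx\lambda bM$ independent uniform single-bit upsets. This is precisely the hypothesis of Theorem~\ref{thm:dose}. I would also note that $b=\Theta(N)$, since each primitive stores a fixed number of components ($3+3+4+1+3+3((\ell+1)^2-1)$) of fixed width.

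\textbf{Step 2: the scrub condition.} Because the guard is applied at the start of each window, the model entering the window is clean or in-support by Theorem~\ref{thm:guard}(i)--(ii). Corruption therefore does not carry over between windows, and the dose relevant to the threshold is only the within-window dose. Requiring $\lambda bM\le k_\tau$ and substituting $k_\tau=\Theta(N^{\alpha}\tau)$ from Theorem~\ref{thm:dose} gives
\[
M\le \frac{k_\tau}{\lambda b}=\Theta\!\left(\frac{N^{\alpha}\tau}{\lambda N}\right)=\Theta\!\left(N^{\alpha-1}\tau/\lambda\right).
\]
This step is the main obstacle. The claim holds in expectation only, and Theorem~\ref{thm:dose} concerns the guarded regime, whereas upsets accrue unguarded between scrubs. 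I would handle this by stating the corollary for the guarded model. The in-window upsets before the next scrub can include catastrophic ones, so the bound must be read as controlling the dose at the scrub boundary, where the guard then removes the heavy tail (Theorem~\ref{thm:complete}). Alternatively, one assumes, as the statement implicitly does, that rendering also passes through the guard, so that Theorem~\ref{thm:dose}'s guarded additive law applies throughout the window. I would adopt the latter reading explicitly.

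\textbf{Step 3: cost.} The guard is a per-component clamp (Algorithm~\ref{alg:guard}). Each node holds $N/P$ primitives, and the boxes $B_f$ are global constants known in advance, so the clamp involves no communication and costs $O(N/P)$ work per application. Amortised over $M$ frames, the per-frame steady-state cost is $O(N/(PM))$, which is decreasing in both $P$ and $M$. Combining Steps 2 and 3 then gives the full statement.
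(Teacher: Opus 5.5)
Your argument is the paper's own: linearity of expectation gives the inter-scrub dose $\lambda bM$, substituting $k_\tau=\Theta(N^{\alpha}\tau)$ from Theorem~\ref{thm:dose} together with $b=\Theta(N)$ gives $M\le\Theta(N^{\alpha-1}\tau/\lambda)$, and the communication-free $O(N/P)$ clamp amortised over $M$ frames gives $O(N/(PM))$. Two of your Step~2 side remarks are wrong, but the statement does not need them, and the paper simply takes $\lambda bM\le k_\tau$ as the requirement: first, Theorem~\ref{thm:guard} does not return the model to a clean state at a scrub, since in-box flips are left untouched and clamped values can lie up to the in-box spread from the clean ones, so in-support residuals do carry over between windows; second, reading the statement as guarding every render would make the guard cost $O(N/P)$ per frame, which contradicts your own Step~3 amortisation.
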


\FloatBarrier
Algorithm~\ref{alg:guard} states the guard as it is deployed. The support box is a
one-time reduction over the parameters after training. Per frame, the clamp is a
data-parallel map over primitives with no dependence between them, so it is
applied independently on each node of a sort-first renderer over the primitives
that node owns, before that node composites its tile. The work per node is linear
in its primitive count and carries no communication, so the guard adds neither a
synchronization point nor cross-node traffic; and because each clamp is the
detection of an out-of-range value, the per-node clamp count is accumulated and
surfaced to the host as a silent-data-corruption signal rather than silently
absorbed.

\begin{algorithm}[H]
\caption{Parallel support guard with node-local containment}
\label{alg:guard}
\begin{algorithmic}[1]
\Statex \textbf{Precompute once after training} (reduction over $N$ primitives):
\For{each field $f$ and component $c$ \textbf{in parallel}}
  \State $\ell_{f,c}\gets\min_i\theta_{i,f,c}$;\quad $h_{f,c}\gets\max_i\theta_{i,f,c}$
\EndFor
\Statex \textbf{Each frame, on node $p$ holding primitive subset $S_p$:}
\State $\textit{clamps}_p\gets 0$
\For{each primitive $i\in S_p$ \textbf{in parallel}}
  \For{each field $f$, component $c$}
    \If{$\theta_{i,f,c}\notin[\ell_{f,c},h_{f,c}]$ \textbf{or} $\theta_{i,f,c}$ non-finite}
      \State $\theta_{i,f,c}\gets\operatorname{clamp}(\theta_{i,f,c},\ell_{f,c},h_{f,c})$
      \State $\textit{clamps}_p\gets\textit{clamps}_p+1$
    \EndIf
  \EndFor
\EndFor
\State render $S_p$ to node $p$'s tile; composite tiles across nodes
\If{$\sum_p\textit{clamps}_p$ exceeds the health threshold}
  \State raise host parity/scrub alert \Comment{do not mask a degrading device}
\EndIf
\end{algorithmic}
\end{algorithm}

\section{Results}
\label{sec:results}

All numbers below come from the campaign described in Section~\ref{sec:method}:
\totalInjections{} single-bit upsets, \nScenes{} scenes, six fields, every bit
position, three precisions, \gpuHours{} of GPU time at a sustained
\meanUtil\% utilization.

\subsection{Concentration of single-bit criticality}
\label{sec:criticality}

Figure~\ref{fig:heatmap} maps the mean corruption footprint over the field and
bit-position grid for \texttt{fp32}. The map is almost entirely dark: the
overwhelming majority of single-bit upsets change essentially no pixels, because
the corrupted primitive is one of hundreds of thousands and contributes to a
small region seen from few views. Against that background, a thin set of cells
stands out. The brightest by a wide margin is the sign bit of the logarithmic
scale, followed by its high exponent bits and, more weakly, the high exponent
bits of the mean and the direct-current color. Quaternion and higher-order
spherical-harmonic bits are inert at this scale. Table~\ref{tab:criticality}
quantifies the per-field picture: flipping the scale sign bit yields a mean
footprint of \scalesSignFootMean\% with a 99th percentile of
\scalesSignFootPNN\%, two to three orders of magnitude above any mantissa class.

\begin{figure}[H]
  \centering
  \includegraphics[width=0.92\textwidth]{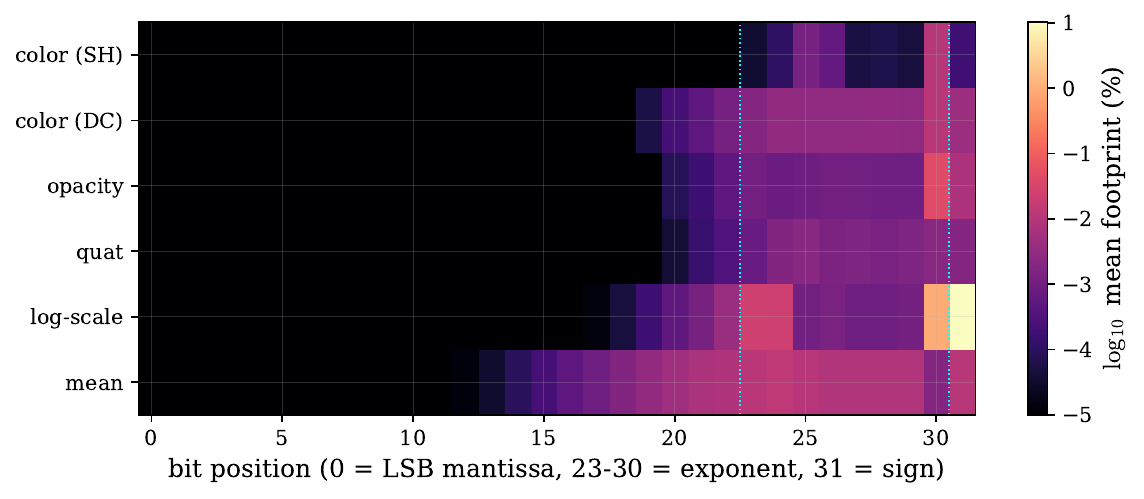}
  \caption{Mean corruption footprint (fraction of pixels changed by more than one
  8-bit level, $\log_{10}$ scale) over parameter field and bit position, averaged
  across the four scenes at \texttt{fp32}. Severity is concentrated in the sign
  and high exponent bits of the logarithmic scale, with weaker contributions from
  the mean and direct-current color. Mantissa bits and the quaternion and
  higher-order color fields are effectively inert.}
  \label{fig:heatmap}
\end{figure}

\begin{table}[H]
\centering
\small
\caption{Per-field single-bit upset severity at \texttt{fp32}, pooled over scenes and bits. Footprint is the percent of pixels changed; quantiles expose the tail. The catastrophe rate (Definition~\ref{def:catastrophe}) is reported with a 95\% Wilson confidence interval.}
\label{tab:criticality}
\begin{tabular}{lrrrrrr}
\toprule
Field & median & p95 & p99 & max & mean & catastrophe (\%, 95\% CI) \\
 & \multicolumn{5}{c}{footprint (\% of frame)} & \\
\midrule
mean & 0.000 & 0.015 & 0.05 & 1.8 & 0.004 & 0.004 [0.002, 0.008] \\
log-scale & 0.000 & 0.237 & 6.16 & 99.4 & 0.351 & 2.782 [2.719, 2.847] \\
quat & 0.000 & 0.002 & 0.01 & 0.6 & 0.000 & 0.000 [0.000, 0.002] \\
opacity & 0.000 & 0.003 & 0.02 & 2.4 & 0.002 & 0.004 [0.002, 0.007] \\
color (DC) & 0.000 & 0.005 & 0.02 & 1.0 & 0.001 & 1.016 [0.978, 1.056] \\
color (SH) & 0.000 & 0.000 & 0.01 & 0.7 & 0.000 & 0.000 [0.000, 0.002] \\
\bottomrule
\end{tabular}
\end{table}

Figure~\ref{fig:qualitative} shows what a scale-sign upset looks like. A single
flipped bit turns one primitive into a translucent sheet spanning the frame,
which is precisely the failure mode that a redundant representation cannot hide:
the corrupted primitive is composited over the whole image rather than a small
neighborhood.

\begin{figure}[H]
  \centering
  \includegraphics[width=0.92\textwidth]{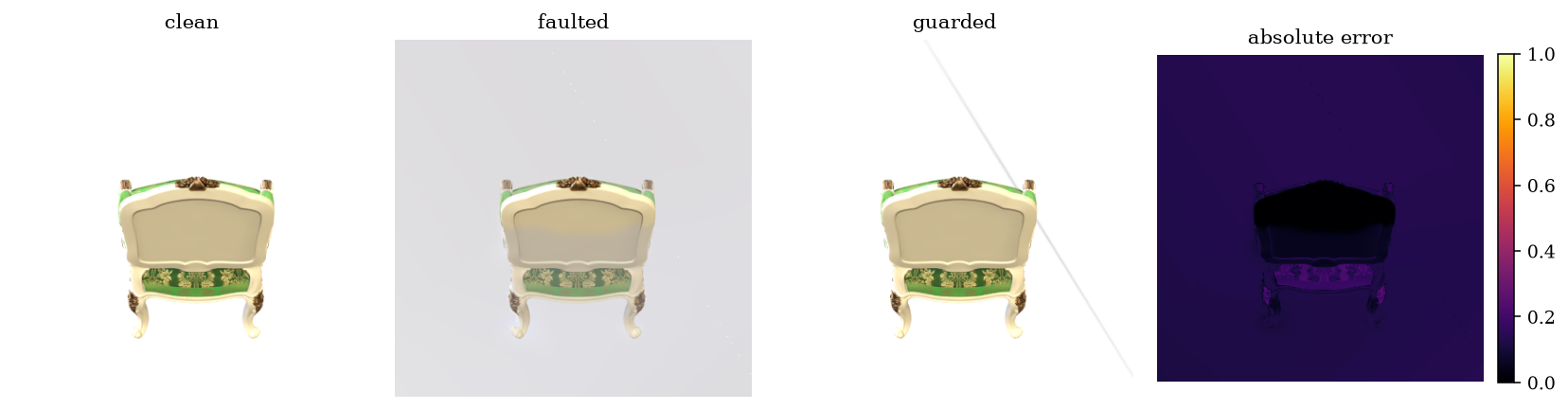}
  \caption{Effect of one bit. From left: the clean render; a single scale sign-bit
  upset on one primitive, which enlarges its spatial extent across the frame; the
  same upset after the support guard, which clamps the logarithmic scale back into
  the trained box and restores the frame; and the absolute per-pixel error of the
  faulted render. Other fields produce, at worst, localized speckle.}
  \label{fig:qualitative}
\end{figure}

This concentration contrasts with the deep-network case, where sensitivity is
distributed across many weights and a few adversarial flips suffice
\cite{rakin2019bitflip}. In 3DGS the sensitive bits are few and are predictable
from the parameterization.

\subsection{Validation of the perturbation bound}
\label{sec:results-theory}

Lemma~\ref{lem:value} predicts that, within the mantissa, the value perturbation
and therefore the peak image error grow by a factor of two per bit position until
the change saturates the output. Figure~\ref{fig:theory} plots the mean peak
error against mantissa bit index for the scale field at \texttt{fp32}; the
measured slope is \theorySlope{} per bit in $\log_2$, against the predicted unit
slope before saturation, and the curve flattens once a single primitive's
contribution saturates the affected pixels. The agreement confirms that the
ordering in Figure~\ref{fig:heatmap} is a consequence of the floating-point
format composed with the splatting activations, not an artifact of any one scene.

\begin{figure}[H]
  \centering
  \begin{subfigure}[b]{0.46\textwidth}\centering
    \includegraphics[width=\linewidth]{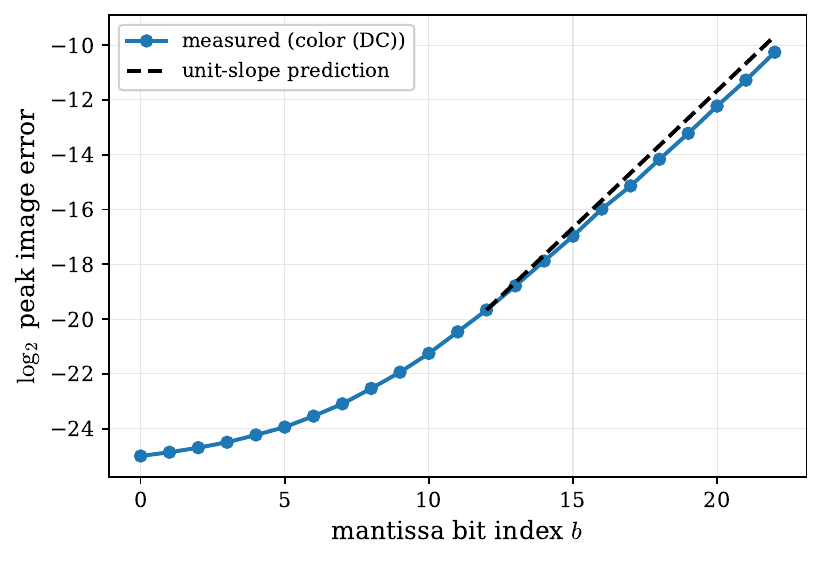}
    \caption{}\label{fig:theory}
  \end{subfigure}\hfill
  \begin{subfigure}[b]{0.52\textwidth}\centering
    \includegraphics[width=\linewidth]{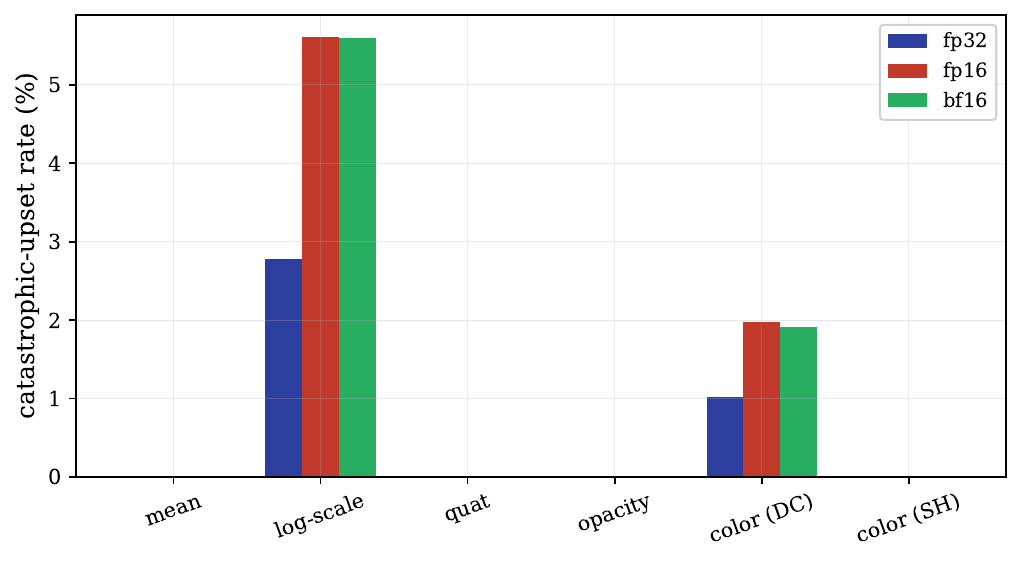}
    \caption{}\label{fig:precision}
  \end{subfigure}
  \caption{(a) Measured peak image error versus mantissa bit index for the scale
  field ($\log_2$ axis), with the unit-slope prediction of Lemma~\ref{lem:value};
  error doubles per bit until a single primitive's contribution saturates the
  pixels it covers. (b) Catastrophic-upset rate by field and numeric format:
  reduced precision relocates rather than removes the exposure, the full-exponent
  \texttt{bf16} keeping the explosive scale and mean flips while \texttt{fp16}
  redistributes them across a shorter exponent.}
  \label{fig:theoryprec}
\end{figure}

\subsection{Numeric precision changes the exposure}
\label{sec:results-prec}

Because the dangerous bits are the sign and exponent, the choice of numeric
format changes how much of the word is dangerous and how violent the exponent
flips are. Figure~\ref{fig:precision} compares \texttt{fp32}, \texttt{fp16}, and
\texttt{bf16}. \texttt{bf16} retains the full 8-bit exponent of \texttt{fp32},
so its exponent flips are equally explosive while its short mantissa makes a
larger fraction of the word harmless; \texttt{fp16}, with a 5-bit exponent,
spreads a slightly larger share of catastrophic outcomes across its exponent
bits but caps their dynamic range. The practical reading is that reduced
precision does not reduce SEU exposure, it relocates it, and the guard below is
the appropriate response in every format.

\subsection{Behavior under accumulated dose}
\label{sec:results-multi}

A realistic exposure accumulates many upsets over time rather than one. We inject
$k$ simultaneous upsets drawn uniformly over the entire stored bit budget and
measure the global render quality, sweeping $k$ from one to \multiupsetKmax{}.
Figure~\ref{fig:multiupset} shows two regimes. Without protection the redundancy
provides a wide margin: the model absorbs \multiupsetKthirty{} simultaneous random
upsets before the mean PSNR falls below \SI{30}{\decibel}, but the margin is
finite, and the curve eventually declines because of the same rare scale-sign
upsets that dominate the single-upset tail, reaching
\noguardMultiPSNRhi~\si{\decibel} at the heaviest dose. With the support guard the
second regime does not appear: the guarded curve remains close to clean across the
entire range, at \guardMultiPSNRhi~\si{\decibel} at the dose where the unguarded
renderer has collapsed. The guard removes the failure mode that bounds the
redundancy margin rather than only extending it.

\begin{figure}[H]
  \centering
  \begin{subfigure}[b]{0.54\textwidth}\centering
    \includegraphics[width=\linewidth]{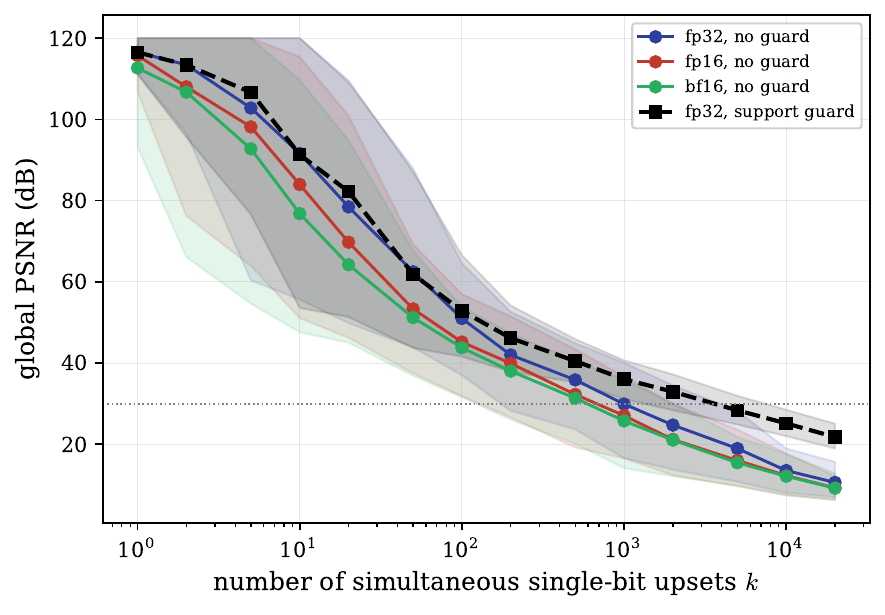}
    \caption{}\label{fig:multiupset}
  \end{subfigure}\hfill
  \begin{subfigure}[b]{0.44\textwidth}\centering
    \includegraphics[width=\linewidth]{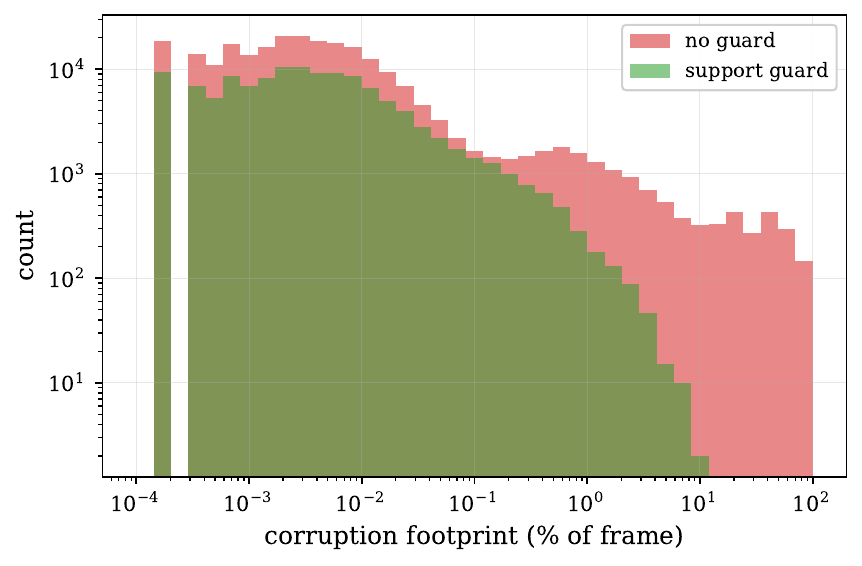}
    \caption{}\label{fig:guard}
  \end{subfigure}
  \caption{(a) Global render quality versus the number of simultaneous single-bit
  upsets: thin curves are the unprotected renderer in the three numeric formats
  (quality held by redundancy over orders of magnitude, then collapsing), the bold
  dashed curve the same renderer with the support guard, clean-equivalent across the
  dose range; bands are the spread over scenes and repeats. (b) Distribution of the
  corruption footprint with and without the guard: the guard removes the heavy tail
  of out-of-range scale and exponent flips, leaving only the small in-support
  residual.}
  \label{fig:doseguard}
\end{figure}

\subsection{Support guard evaluation}
\label{sec:results-guard}

We re-run the campaign on the same fault grid with the support guard of
Theorems~\ref{thm:guard} and~\ref{thm:complete} enabled. The guard neutralizes
\guardCoverage\% of catastrophic
upsets, and for the dominant scale sign-bit upsets it raises the mean global PSNR
from \guardBeforePSNR{}~\si{\decibel} to \guardAfterPSNR{}~\si{\decibel}, while
leaving clean fidelity unchanged by construction. The completeness of
Theorem~\ref{thm:complete} is borne out empirically: across all \guardNsites{}
guarded single-bit upsets the worst corruption footprint observed was
\guardWorstFoot\% of the frame, with \guardResidCat{} residual catastrophic
events, so the heavy tail of Figure~\ref{fig:guard} is gone. The cost is
\guardCostUs{} per frame, or \guardCostFrac$\times$ a single-view render, and
because it is a pure per-primitive clamp it parallelizes trivially and can be run
as a periodic scrub rather than every frame. What remains is the in-support
mantissa-level residual that Lemma~\ref{lem:value} bounds and that is
perceptually invisible. These per-mitigation outcomes are collected against the
alternatives in Table~\ref{tab:mitigation}.

\subsection{Resilience versus primitive count}
\label{sec:results-scaling}

If redundancy is the source of resilience, a model with more primitives should
absorb more accumulated corruption while the severity of an individual upset stays
fixed. Figure~\ref{fig:scalingN} subsamples the trained models to a range of
primitive counts and reports the redundancy budget, the number of simultaneous
upsets at which the mean PSNR crosses \SI{30}{\decibel}. The budget grows from
\scalingKlo{} at \scalingNlo{} primitives to \scalingKhi{} at \scalingNhi{},
approximately in proportion to the count, while the footprint of a scale-sign
upset is essentially independent of the count, because an exploded primitive
covers the frame regardless of how many others are present. Redundancy raises the
tolerated dose but does not reduce the severity of an individual catastrophic
upset, which is the regime the guard addresses.

\begin{figure}[H]
  \centering
  \includegraphics[width=0.68\textwidth]{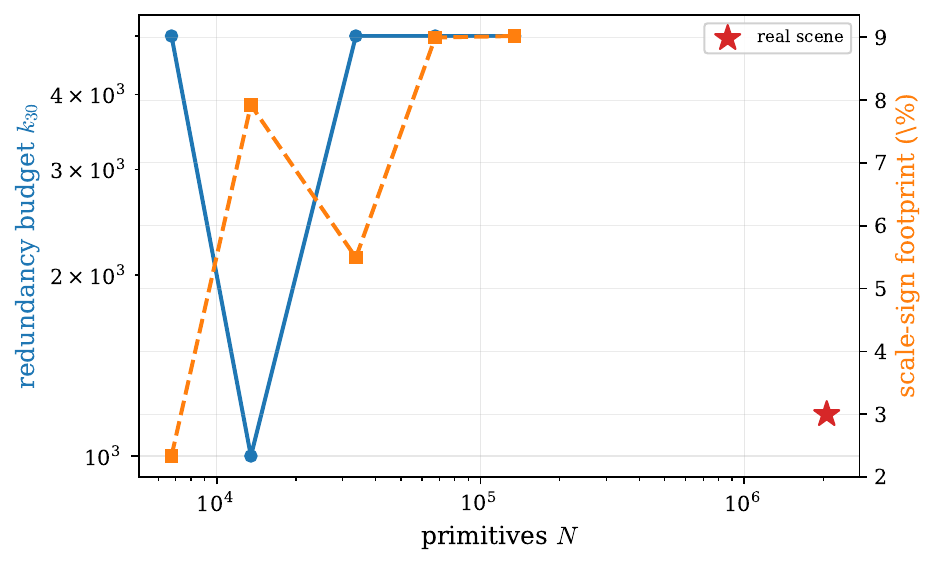}
  \caption{Redundancy budget (left axis, simultaneous upsets to reach
  \SI{30}{\decibel}) and scale-sign upset footprint (right axis) versus primitive
  count. The budget scales with the count; the per-upset footprint does not.}
  \label{fig:scalingN}
\end{figure}

To confirm that the findings are not an artifact of small synthetic models, we
repeat the measurement on a pretrained real-world scene, the Tanks-and-Temples
\emph{\realName{}} reconstruction with \realN{} primitives, an order of magnitude
larger than the synthetic models (Figure~\ref{fig:realscene}). The concentration persists: a scale sign-bit
upset reaches a 99th-percentile footprint of \realScaleFootPNN\% of the frame, the
star in Figure~\ref{fig:scalingN}, confirming that the worst case is independent
of primitive count. The denser scene does lower the mean footprint of a random
scale-sign upset, to \realScaleFootNg\%, because many exploded primitives are now
occluded by the others in front of them, which is redundancy acting as partial
masking; but the heavy tail remains, and the uniform-random single-bit catastrophe
rate of \realCatNg\% is driven to \realCatG\% by the guard, as on the synthetic
scenes.

\begin{figure}[H]
  \centering
  \includegraphics[width=\textwidth]{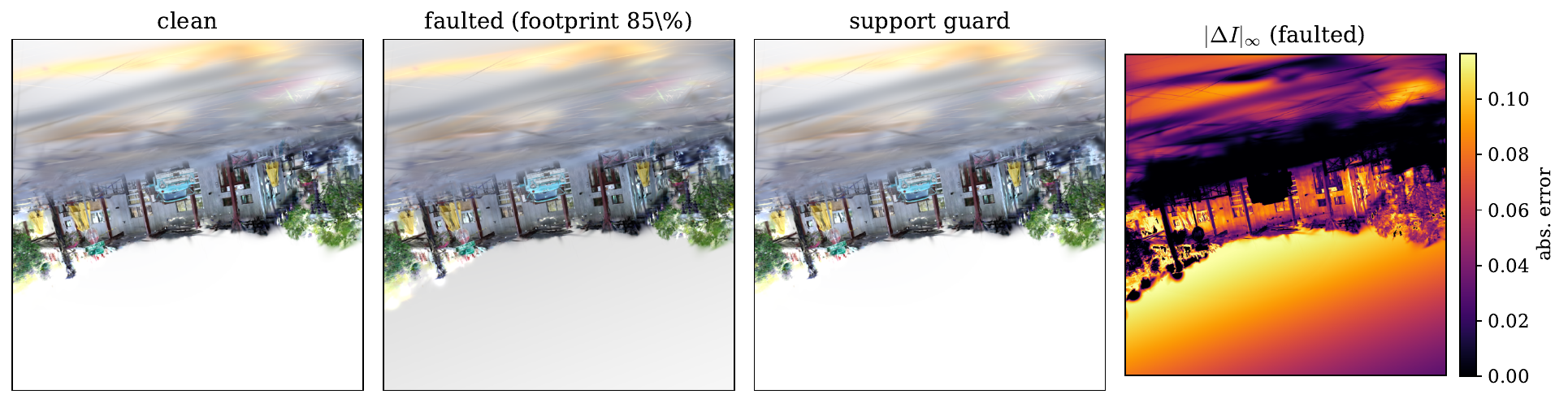}
  \caption{A single scale sign-bit upset on the real-world Tanks-and-Temples
  \emph{\realName} reconstruction (\realN{} primitives), from a held-out orbit
  view. From left: the clean render; the same view after one stored scale
  sign-bit is flipped, which inflates one primitive into a translucent sheet
  spanning most of the frame; the same upset contained by the support guard,
  which restores the clean image; and the absolute per-pixel error
  $|\Delta I|_\infty$ of the faulted frame. Because the public checkpoint omits
  the original training cameras the absolute sharpness of the orbit view is
  incidental, but the corruption footprint is measured against the clean render
  at the identical pose, so the comparison is exact. The frame-spanning failure
  mode and its removal by the guard persist at an order of magnitude more
  primitives than the synthetic scenes.}
  \label{fig:realscene}
\end{figure}

\subsection{A measured scaling law and the distributed scrub schedule}
\label{sec:results-accum}

Lemma~\ref{lem:redundancy} posits a power law for the single-upset error; we
measure its exponent with a batched campaign of \accTotalSamples{} million upsets
(\accSamplesPerCell{} million per cell) over model sizes from \accNlo{} to
\accNhi{} primitives, rendering \batchB{} corrupted variants per call at full
device utilisation, which is the regime the batched engine was built for. The mean
single-upset error is heavy-tailed, dominated by the rare scale-sign explosions, so
a stable estimate of $\sigma^2(N)$ needs the millions of samples this campaign
provides. A log-log fit of the \emph{median} single-upset error against $N$ gives
an exponent $\alpha=\accAlpha{}$ ($R^2=\accRsq{}$), confirming the redundancy law of
Lemma~\ref{lem:redundancy}: the typical upset shrinks with model size. The
\emph{mean} error tells the complementary story the lemma predicts, with a fitted
exponent of only \accMeanExp{}: it barely decreases with $N$ because it is dominated
by the rare scale-sign explosions, whose footprint is size-independent, so
redundancy alone does not bound the expected damage. The support guard removes that
tail, lowering the mean single-upset error by a factor of \accGuardFactor{} at the
largest size, which is what restores the size-scaling of the dose budget,
$k_\tau=\Theta(N^{\,\accAlpha{}}\tau)$; the budget-versus-size trend of
Figure~\ref{fig:scalingN} is that same law seen through the accumulated dose. Through Corollary~\ref{cor:scrub} this
sets the distributed scrub interval: a node may defer re-guarding for
$M=\Theta(N^{\,\accScrubExp{}}\tau/\lambda)$ frames, so larger models, which carry
more redundancy, tolerate proportionally longer intervals between scrubs and
amortise the guard's cost further.

\subsection{Alternative defenses}
\label{sec:results-defenses}

The support guard is one point in a design space that also includes
error-correcting codes and redundancy. Table~\ref{tab:mitigation} compares it, on
a shared fault grid, with a selective clamp of only the scale and opacity fields,
an error-correcting code that protects the sign and exponent bits, and full
duplication that corrects every upset. Full duplication and the sign-exponent code
drive the catastrophe rate to zero, as expected, but at three times and roughly
$1.3$ times the memory respectively, whereas the support guard reaches the same
elimination of catastrophic upsets at no memory overhead and a per-frame cost of
\guardCostUs{}. The selective clamp is cheaper still and nearly as effective,
because the scale and opacity fields carry almost all of the catastrophic mass.

\begin{table}[H]
\centering
\small
\caption{Mitigations on a shared \texttt{fp32} fault grid pooled over scenes. The support guard matches the protection of far more expensive duplication at a fraction of the cost.}
\label{tab:mitigation}
\begin{tabular}{lrrl}
\toprule
Defense & catastrophe (\%) & mean foot.\,(\%) & cost \\
\midrule
none & 0.552 & 0.0880 & 0 \\
support guard & 0.052 & 0.0035 & 1$\times$ mem, $\sim$0.1 ms/frame \\
selective guard & 0.156 & 0.0035 & 1$\times$ mem, $<$0.1 ms/frame \\
ECC sign+exp & 0.000 & 0.0002 & $\sim$1.3$\times$ mem, parity \\
full duplication & 0.000 & 0.0000 & 3$\times$ mem, voting \\
\bottomrule
\end{tabular}
\end{table}

\subsection{Reliability under realistic upset rates}
\label{sec:results-survival}

The per-upset catastrophe probability and the dose response together give a
reliability estimate. Modeling the probability that a frame is catastrophic after
$k$ independent upsets as $1-(1-p_c)^k$ with $p_c=\pcUpset\%$ reproduces the
measured dose response to within the sampling spread. Combining $p_c$ with the
stored size of the model, $\modelBits$ bits, and representative single-event-upset
rates gives the mean time between catastrophic frames in Table~\ref{tab:survival}.
Unprotected, the estimate is \mtbfGroundNg{} at ground level but only
\mtbfLeoNg{} in low-Earth orbit; with the guard the residual catastrophe rate is
\pcGuard\%, which moves the mean time between catastrophic frames beyond any
mission duration. The rates are order-of-magnitude values from the soft-error
literature \cite{normand1996single,baumann2005radiation}, so the table should be
read as relative rather than absolute, but the gap between the two columns is the
result.

\begin{table}[H]
\centering
\caption{Estimated mean time between catastrophic frames for a model of $\modelBits$ stored bits under representative single-event-upset rates, without and with the support guard. Rates are order-of-magnitude values from the soft-error literature.}
\label{tab:survival}
\begin{tabular}{lrr}
\toprule
Environment & no guard & support guard \\
\midrule
ground (sea level) & 71 yr & 732 yr \\
avionics ($\sim$10 km) & 86 d & 2.4 yr \\
low-Earth orbit & 3 d & 27 d \\
\bottomrule
\end{tabular}
\end{table}

\subsection{Cross-node contamination in distributed rendering}
\label{sec:results-dist}

In a sort-first parallel rasterizer the screen is partitioned into regions
assigned to nodes, and each node renders the primitives whose projected footprint
overlaps its region \cite{molnar1994sorting}, after which a compositor assembles the
tiles into the frame (Figure~\ref{fig:pipeline}). A corrupted primitive contaminates
every region its footprint reaches, so the footprint is, read through the
partition, the number of nodes a single upset corrupts. We measure this directly
from the projected geometry of the real scenes, sweeping the node count from four
to \distMaxT{}. The fraction of nodes a
single scale-sign upset contaminates without protection reaches \distFracNg\%
of the nodes at the finest partition, because the exploded primitive spans the
screen, whereas with the node-local guard it is confined to \distFracG\%. A
render-based check, comparing the predicted contaminated regions against the
regions whose pixels actually change, agrees at an intersection-over-union of
\distIoU{}. The guard, applied per node before compositing, contains the
contamination at its source and prevents the communication amplification that a
screen-spanning primitive would otherwise cause in a sort-first pipeline.

\begin{figure}[H]
\centering
\begin{tikzpicture}[font=\small, >={Stealth[length=2.2mm]},
  box/.style={draw, rounded corners, minimum height=9mm, minimum width=23mm, align=center}]
  \node[box, fill=blue!6] (vram) at (0,0) {scene params\\(VRAM, $N$)};
  \node[box, fill=blue!6] (part) at (3.3,0) {sort-first\\partition};
  \node[box, fill=green!12] (n1) at (7.2,1.15) {node $1$\\guard\,$\to$\,render};
  \node[box, fill=green!12] (np) at (7.2,-1.15) {node $P$\\guard\,$\to$\,render};
  \node at (7.2,0) {$\vdots$};
  \node[box, fill=orange!15] (comp) at (11.1,0) {compositor};
  \node[box, fill=blue!6] (frame) at (14.1,0) {frame};
  \draw[->] (vram) -- (part);
  \draw[->] (part) -- (n1);
  \draw[->] (part) -- (np);
  \draw[->] (n1) -- (comp);
  \draw[->] (np) -- (comp);
  \draw[->] (comp) -- (frame);
\end{tikzpicture}
\caption{Sort-first distributed rendering with the node-local support guard.
The guard (Algorithm~\ref{alg:guard}) is a per-primitive map applied independently
on each node before it renders its tile, so an out-of-range primitive is clamped at
the node that owns it and cannot be broadcast into the composite. The guard adds no
synchronization point and no cross-node traffic.}
\label{fig:pipeline}
\end{figure}

Contamination is a count of nodes, but the operational cost in a
barrier-synchronized compositor is set by the slowest rank. We therefore render
each node region independently, as a sort-first rank would, and time it. With a
sixteen-region partition the slowest-rank time rises from \rankBarrierClean{}~ms on
the clean scene to \rankBarrierCorrupt{}~ms under an unguarded scale-sign upset,
because the exploded primitive must be processed by every rank whose region it
overlaps, and the load imbalance across ranks reaches \rankImbalCorrupt$\times$ the
mean. The node-local guard removes both effects, returning the slowest-rank time to
\rankBarrierGuard{}~ms and the imbalance to \rankImbalGuard$\times$.

We then validate this on real hardware rather than by emulation. Running the
renderer across \mgpuWorld{} physical GPUs as two ranks, each rendering one screen
half at $\mgpuRenderW{}^2$ and contributing it to the composite over the PCIe
interconnect by an NCCL all-gather measured at \mgpuTransferGbps~GB/s, a single
scale-sign upset contaminates both GPU nodes (\mgpuContamNg{} of \mgpuWorld{}),
whereas the node-local guard, applied independently on each GPU's replica before it
renders, confines the corruption to \mgpuContamG{} node. The per-rank render time
stays balanced at \mgpuRankMs~ms and the composited frame takes \mgpuFrameMs~ms.
The same effect holds at higher node counts: on a \mgpuFourWorld{}-GPU L40S system the
unguarded scale-sign upset reaches all \mgpuFourContamNg{} nodes, which the node-local
guard confines to \mgpuFourContamG{}. These are genuine multi-GPU measurements over a
real interconnect; scaling to a datacenter network of many nodes, where the
contamination count sets the cross-node traffic, remains future work.

\subsection{Predicting criticality without injection}
\label{sec:results-predict}

The concentration of risk suggests that criticality is predictable from static
features of a fault site rather than from a render. We train a classifier to
predict whether an upset is catastrophic from only the field, the bit position,
the bit class, and the stored value. It reaches an area under the ROC curve of
\predAUC{}; a classifier given only the field and bit
position, with no value information, reaches \predAUCfieldbit{}, and the single
most informative feature is the \predTopFeat{}. Criticality is therefore almost
entirely determined by which field and which bit are struck, so a future model can
be screened for its vulnerable bits without running an injection campaign. The
prediction also transfers across scenes: trained on all but one scene and tested on
the held-out one, the classifier keeps a leave-one-scene-out area under the ROC
curve of \predAUCcross{} (minimum \predAUCcrossMin{} across the held-out scenes),
which is expected because the criticality is set by the floating-point layout and
the activations rather than by scene content.

\subsection{Throughput and feasibility}
\label{sec:results-throughput}

The engine is a parallel-computing artifact, and its throughput underwrites the
campaign. The rasterizer absorbs additional cameras at near-constant per-camera
cost up to roughly sixteen simultaneous views, reaching \renderPeakMpix{}
megapixels per second before it saturates the device, which is why batching the
$K$ views of each injection into one call is the throughput-critical choice. The
campaign sustained \meanUtil\% utilization for \gpuHours{}. Reproducing it on a
single-threaded CPU rasterizer would require an estimated \cpuDays{}, the concrete
sense in which the device was used to do more work rather than the same work
faster. The per-injection campaign is latency-bound, because each corrupted render
is small relative to the host control between renders. Batching the corruption
itself removes that limit: rendering \batchB{} independently corrupted variants of
the scene in a single rasterizer call, with the bit flips applied to the batch by a
vectorized scatter, saturates the device at \batchUtil\% utilization and
\batchPower~W and sustains \batchInjPerSec{} single-bit injections per second,
\batchGaussInst{} million primitive instances per launch. The fault-injection
engine is therefore not merely a script around a renderer but a throughput-bound
parallel workload in its own right.

The campaign is also embarrassingly parallel across devices, since independent
upsets need no communication. Run data-parallel on \scaleFourNodes{} L40S accelerators
it reaches \scaleFourAgg{} single-bit injections per second, a \scaleFourSpeedup$\times$
speedup over one of them at \scaleFourEff\% parallel efficiency, with all
\scaleFourNodes{} devices held at \scaleFourUtil\% utilization. The bit-level criticality
is unchanged on this hardware: the L40S is an Ada-generation accelerator with ECC
memory while the device used above is Blackwell, yet the field-and-bit ordering is
identical, which confirms that the criticality is a property of the representation
and the floating-point layout rather than of any one architecture. Only the
single-device throughput differs with the hardware, \lFortySingleInj{} injections per
second on an L40S against \batchInjPerSec{} on the Blackwell device, tracking their
memory bandwidth rather than any change in the result.

The synthetic models are small by design, so that a per-bit catastrophe rate can
be estimated with thousands of samples per cell; the guard cost, however, is a
function of model size and should be reported as such rather than at a single
operating point. We therefore replicate the real scene to a range of sizes that
reach \maxStressN{} primitives and saturate the device memory, using up to
\vramMax{}~GB of VRAM, and measure the guard cost and render throughput at each.
The guard cost grows close to linearly with
the primitive count, as expected for a bounded pass over the parameters that the
guard clamps. Because the guard skips the inert higher-order spherical-harmonic
coefficients, which are forty-five of the fifty-nine components per primitive, that
pass touches only a quarter of the parameter memory: it reaches \guardMsBig{}~ms
per frame at \maxStressN{} primitives, \guardFracBig\% of the render at that size,
while the render sustains \mpixBig{} megapixels per second. The guard also need not
run every frame; deployed as a periodic scrub its per-frame cost amortizes toward
zero, and these per-frame numbers are the upper bound. The guard continues to remove the catastrophic tail at this scale, with
the mean scale-sign footprint falling from \bigScaleFootNg\% to \bigScaleFootG\%, on
a model holding \bigParamBits{} billion stored bits, and it sustains an effective
parameter-read bandwidth of \guardBwBig{}~GB/s. We further drive the largest model
with a continuous fault storm of \stormK{} random upsets injected every frame over
\stormFrames{} frames, the closest analogue to a sustained radiation environment at
real-time rates: the guarded frame latency holds at \stormLatG{}~ms against
\stormLatNg{}~ms unguarded. The guard cost is therefore not a fixed small constant
measured in isolation but a quantity that tracks model size, exercises a meaningful
fraction of memory bandwidth, and stays a small part of the render it protects.

\section{Discussion and Limitations}
\label{sec:discussion}

Our fault model is the single-bit upset in the stored checkpoint
representation, which is the form a model takes in VRAM between frames. A
renderer that pre-activates parameters into a separate cache would expose those
cached values as additional targets; we expect the qualitative concentration to
carry over, since the same activations are involved, but the exact bit ordering
would shift, and we mark this as the most useful extension.

The first-order bound of Proposition~\ref{prop:image} holds only while the upset
does not change which pixels a primitive covers, and therefore does not change the
tile binning or the front-to-back sort. This is exactly the small-perturbation
regime of the mantissa bits, which is where we apply it; the catastrophic regime of
the sign and high exponent bits violates local linearity precisely because the
exploded primitive rewrites the sort keys and the tile assignment, and we do not
use the linear bound there. That regime is instead characterized by the exact value
perturbation of Lemma~\ref{lem:value} and by the measured footprint, and the
guard's guarantee in Theorem~\ref{thm:complete} is purely geometric, a bound on the
projected scale, so it does not depend on local linearity at all. The
higher-order spherical-harmonic coefficients are inert for a related reason: they
modulate a primitive's color as a function of view direction but cannot change its
spatial extent, so even an aggressive view-dependent specular flip can only
saturate color inside the footprint the primitive already has, which is bounded by
the scale, rather than expand it. The support guard does not bound color
saturation, but Figure~\ref{fig:heatmap} shows it is a negligible contributor to
the footprint, and a per-field color clamp would cover it at the same cost if
needed.

Our fault model reinterprets the stored value as an integer and flips a bit, which
captures an upset that reaches the rendered output but not the hardware masking and
correction that precede it. Server-class accelerators with ECC memory correct
single-bit and detect double-bit upsets, so on such hardware the relevant residual
is the multi-bit and miscorrected events that ECC misses, while consumer and many
embedded GPUs expose the unprotected case we model directly. The support guard is
complementary to ECC rather than a replacement: it operates on the semantic range
of the parameters, not on memory words, so it catches errors that survive or bypass
ECC, including those introduced after correction in registers or caches. It is
important that the guard not silently mask a degrading device. Because a clamp event
is a detection of an out-of-range value, the guard should count and surface its
clamps as a silent-data-corruption signal to the host, so that a rising clamp rate
triggers the same operating-system parity alerts and structural reboot that a
hardware scrubber would; used this way the guard is a cheap detector as well as a
corrector, and it complements periodic memory scrubbing by bounding the impact of
any upset that occurs within a scrub interval.

The guard depends on the trained support box, so its behavior is tied to training
dynamics. Densification and opacity resetting drive the scale distribution toward
small, compact primitives, which is what makes the box tight and the clamp
effective: the largest trained log scale sets the bound in Theorem~\ref{thm:complete},
and a tighter distribution yields a smaller worst-case footprint. Aggressive
optimization that deliberately grows a few very large primitives, for example to
cover a background or a sky with one splat, would widen the box and admit a larger
guarded footprint, which is the one regime in which the primitive-locality
assumption weakens. The box is also computed per component over the converged
model, so it does not protect against a flip that lands within the box yet is wrong
for that primitive; Lemma~\ref{lem:value} bounds that residual, and the predictor
of Section~\ref{sec:results-predict} could supply a per-primitive bound in its
place when a model contains atypically large primitives. A practical deployment
would recompute the box whenever the model is retrained or edited, which is cheap
since it is a single pass over the parameters.

Dynamic and deformable Gaussian splatting, where primitives move and change over
time, is the setting in which the box itself becomes time varying. The guard
extends naturally because its only state is the per-field box and that box is a
single pass to recompute. For a deforming scene the box can be refreshed per
keyframe, or a temporal margin can be carried so that the bound tracks the moving
support without a per-frame pass; the geometric guarantee of
Theorem~\ref{thm:complete} then holds against the current box, and a transient
fault that throws a parameter outside the deformation envelope is still caught. The
one new failure mode is a fault that corrupts the deformation field or the temporal
state rather than a static parameter, which our model does not cover and which we
flag as the natural extension for dynamic representations. The distributed
analysis is conducted by emulation on one device rather than on a physical
cluster, so it speaks to the contamination geometry, the relationship between
footprint and tile count, rather than to network or scheduling effects. We study
trained synthetic scenes at a fixed resolution; larger real-world scenes have
more primitives and therefore even more redundancy, which would deepen rather
than reverse the concentration finding. We inject faults numerically; we
do not claim absolute SEU rates, which are a property of a given device and
environment and are the subject of the beam-testing literature
\cite{oliveira2016evaluation}. Our throughput and bandwidth figures are wall-clock
and counter-free: instruction-level metrics such as instructions per cycle require
hardware performance counters that are not exposed inside the unprivileged
container we ran in, so we report achieved pixel throughput, effective memory
bandwidth, and frame latency under load instead. The contribution is the mapping
from a single-bit upset to its effect on the rendered image, and the defense
derived from it.

\section{Conclusion}
\label{sec:conclusion}

This paper characterized the effect of single-event upsets on 3D Gaussian
splatting rendering and derived a defense from that characterization. The effect
is concentrated: because the representation is highly redundant, most single-bit
upsets are perceptually invisible, and the exceptions are predictable from the
floating-point layout composed with the rendering activations, dominated by the
sign bit of the logarithmic scale, which enlarges a primitive across the frame. A
closed-form bound reproduces the per-bit ordering. The support guard, a
per-primitive clamp to the trained parameter box, costs a fraction of one rendered
frame, parallelizes across primitives, leaves a clean model unchanged, and is
proved to make frame-covering corruption impossible under any single-bit upset; it
removed the catastrophic events across the full campaign and kept the rendered
quality close to clean under accumulated dose where the unprotected renderer
degrades. With this guard, a representation designed for rendering speed also
tolerates single-event upsets, the property required for deployment on spaceborne,
robotic, and clustered hardware.

\FloatBarrier
\bibliographystyle{plain}
\bibliography{refs}

@article{kerbl2023gaussian,
  title={3D Gaussian Splatting for Real-Time Radiance Field Rendering},
  author={Kerbl, Bernhard and Kopanas, Georgios and Leimk{\"u}hler, Thomas and Drettakis, George},
  journal={ACM Transactions on Graphics},
  volume={42},
  number={4},
  year={2023},
  publisher={ACM}
}

@inproceedings{mildenhall2020nerf,
  title={{NeRF}: Representing Scenes as Neural Radiance Fields for View Synthesis},
  author={Mildenhall, Ben and Srinivasan, Pratul P and Tancik, Matthew and Barron, Jonathan T and Ramamoorthi, Ravi and Ng, Ren},
  booktitle={European Conference on Computer Vision (ECCV)},
  year={2020}
}

@article{ye2024gsplat,
  title={gsplat: An Open-Source Library for {Gaussian} Splatting},
  author={Ye, Vickie and Li, Ruilong and Kerr, Justin and Turkulainen, Matias and Yi, Brent and Pan, Zhuoyang and Seiskari, Otto and Ye, Jianbo and Hu, Jeffrey and Tancik, Matthew and Kanazawa, Angjoo},
  journal={arXiv preprint arXiv:2409.06765},
  year={2024}
}

@inproceedings{matsuki2024gaussian,
  title={Gaussian Splatting {SLAM}},
  author={Matsuki, Hidenobu and Murai, Riku and Kelly, Paul H J and Davison, Andrew J},
  booktitle={IEEE/CVF Conference on Computer Vision and Pattern Recognition (CVPR)},
  year={2024}
}

@article{dixit2021silent,
  title={Silent Data Corruptions at Scale},
  author={Dixit, Harish Dattatraya and Pendharkar, Sneha and Beadon, Matt and Mason, Chris and Chakravarthy, Tejasvi and Muthiah, Bharath and Sankar, Sriram},
  journal={arXiv preprint arXiv:2102.11245},
  year={2021}
}

@inproceedings{hochschild2021cores,
  title={Cores that don't count},
  author={Hochschild, Peter H and Turner, Paul and Mogul, Jeffrey C and Govindaraju, Rama and Ranganathan, Parthasarathy and Culler, David E and Vahdat, Amin},
  booktitle={Workshop on Hot Topics in Operating Systems (HotOS)},
  year={2021}
}

@article{oliveira2016evaluation,
  title={Evaluation and Mitigation of Radiation-Induced Soft Errors in Graphics Processing Units},
  author={Oliveira, Daniel A G de and Pilla, Laercio L and Santini, Thiago and Rech, Paolo},
  journal={IEEE Transactions on Computers},
  volume={65},
  number={3},
  pages={791--804},
  year={2016}
}

@inproceedings{fratin2018code,
  title={Code-Dependent and Architecture-Dependent Reliability Behaviors},
  author={Fratin, Vinicius and Oliveira, Daniel and Lunardi, Caio and Santos, Fernando and Rodrigues, Gennaro and Rech, Paolo},
  booktitle={IEEE/IFIP International Conference on Dependable Systems and Networks (DSN)},
  year={2018}
}

@article{baumann2005radiation,
  title={Radiation-Induced Soft Errors in Advanced Semiconductor Technologies},
  author={Baumann, Robert C},
  journal={IEEE Transactions on Device and Materials Reliability},
  volume={5},
  number={3},
  pages={305--316},
  year={2005}
}

@article{normand1996single,
  title={Single Event Upset at Ground Level},
  author={Normand, Eugene},
  journal={IEEE Transactions on Nuclear Science},
  volume={43},
  number={6},
  pages={2742--2750},
  year={1996}
}

@book{mukherjee2008architecture,
  title={Architecture Design for Soft Errors},
  author={Mukherjee, Shubu},
  publisher={Morgan Kaufmann},
  year={2008}
}

@article{huang1984algorithm,
  title={Algorithm-Based Fault Tolerance for Matrix Operations},
  author={Huang, Kuang-Hua and Abraham, Jacob A},
  journal={IEEE Transactions on Computers},
  volume={C-33},
  number={6},
  pages={518--528},
  year={1984}
}

@inproceedings{liu2017fault,
  title={Fault Injection Attack on Deep Neural Network},
  author={Liu, Yannan and Wei, Lingxiao and Luo, Bo and Xu, Qiang},
  booktitle={IEEE/ACM International Conference on Computer-Aided Design (ICCAD)},
  year={2017}
}

@inproceedings{rakin2019bitflip,
  title={Bit-Flip Attack: Crushing Neural Network with Progressive Bit Search},
  author={Rakin, Adnan Siraj and He, Zhezhi and Fan, Deliang},
  booktitle={IEEE/CVF International Conference on Computer Vision (ICCV)},
  year={2019}
}

@inproceedings{mahmoud2020pytorchfi,
  title={{PyTorchFI}: A Runtime Perturbation Tool for {DNNs}},
  author={Mahmoud, Abdulrahman and Aggarwal, Neeraj and Nobbe, Alex and Vicarte, Jose Rodrigo Sanchez and Adve, Sarita V and Fletcher, Christopher W and Frosio, Iuri and Hari, Siva Kumar Sastry},
  booktitle={IEEE/IFIP Int. Conf. on Dependable Systems and Networks Workshops (DSN-W)},
  year={2020}
}

@inproceedings{li2017understanding,
  title={Understanding Error Propagation in Deep Learning Neural Network ({DNN}) Accelerators and Applications},
  author={Li, Guanpeng and Hari, Siva Kumar Sastry and Sullivan, Michael and Tsai, Timothy and Pattabiraman, Karthik and Emer, Joel and Keckler, Stephen W},
  booktitle={Int. Conference for High Performance Computing, Networking, Storage and Analysis (SC)},
  year={2017}
}

@inproceedings{reagen2018ares,
  title={Ares: A Framework for Quantifying the Resilience of Deep Neural Networks},
  author={Reagen, Brandon and Gupta, Udit and Pentecost, Lillian and Whatmough, Paul and Lee, Sae Kyu and Mulholland, Niamh and Brooks, David and Wei, Gu-Yeon},
  booktitle={Design Automation Conference (DAC)},
  year={2018}
}

@article{geissler2021range,
  title={Towards a Safety Case for Hardware Fault Tolerance in Convolutional Neural Networks Using Activation Range Supervision},
  author={Geissler, Florian and Qutub, Syed and Roychowdhury, Sumanta and Asgari, Ali and Peng, Yang and Dhamasia, Akash and Graefe, Ralf and Pattabiraman, Karthik and Paulitsch, Michael},
  journal={arXiv preprint arXiv:2108.07019},
  year={2021}
}

@inproceedings{hari2017sassifi,
  title={{SASSIFI}: An Architecture-Level Fault Injection Tool for {GPU} Application Resilience Evaluation},
  author={Hari, Siva Kumar Sastry and Tsai, Timothy and Stephenson, Mark and Keckler, Stephen W and Emer, Joel},
  booktitle={IEEE International Symposium on Performance Analysis of Systems and Software (ISPASS)},
  year={2017}
}

@inproceedings{tsai2021nvbitfi,
  title={{NVBitFI}: Dynamic Fault Injection for {GPUs}},
  author={Tsai, Timothy and Hari, Siva Kumar Sastry and Sullivan, Michael and Li, Xinghua and Keckler, Stephen W},
  booktitle={IEEE/IFIP International Conference on Dependable Systems and Networks (DSN)},
  year={2021}
}

@article{ieee754,
  title={{IEEE} Standard for Floating-Point Arithmetic},
  author={{IEEE}},
  journal={IEEE Std 754-2019},
  year={2019}
}

@inproceedings{zhang2018unreasonable,
  title={The Unreasonable Effectiveness of Deep Features as a Perceptual Metric},
  author={Zhang, Richard and Isola, Phillip and Efros, Alexei A and Shechtman, Eli and Wang, Oliver},
  booktitle={IEEE/CVF Conference on Computer Vision and Pattern Recognition (CVPR)},
  year={2018}
}

@article{wang2004image,
  title={Image Quality Assessment: From Error Visibility to Structural Similarity},
  author={Wang, Zhou and Bovik, Alan C and Sheikh, Hamid R and Simoncelli, Eero P},
  journal={IEEE Transactions on Image Processing},
  volume={13},
  number={4},
  pages={600--612},
  year={2004}
}

@article{molnar1994sorting,
  title={A Sorting Classification of Parallel Rendering},
  author={Molnar, Steven and Cox, Michael and Ellsworth, David and Fuchs, Henry},
  journal={IEEE Computer Graphics and Applications},
  volume={14},
  number={4},
  pages={23--32},
  year={1994}
}

\appendix
\section{Proofs}
\label{app:proofs}

\begin{proof}[Proof of Lemma~\ref{lem:value}]
Write a normal value as $\theta=(-1)^{\epsilon}2^{e}(1+m)$ with
$m=\sum_{k=1}^{p} m_k 2^{-k}$, $m_k\in\{0,1\}$, and exponent field
$E=e+\mathrm{bias}$ stored in the exponent bits. The mantissa bit at position
$b$ (counting from the least significant, $b=0,\dots,p-1$) has weight
$2^{-(p-b)}$ in $1+m$, hence weight $2^{e}\cdot 2^{-(p-b)}=2^{e+b-p}$ in $\theta$.
Flipping it adds or removes exactly that weight, so $|\Delta\theta|=2^{e+b-p}$.
Flipping the sign bit sends $\theta\mapsto-\theta$, so $\Delta\theta=-2\theta$.
Flipping exponent bit $j$ (of weight $2^{j}$ in $E$) sends
$E\mapsto E\pm2^{j}$, hence $\theta\mapsto\theta\cdot 2^{\pm2^{j}}$; for the
most significant exponent bits $2^{j}$ exceeds the remaining exponent range and
the result is $\pm\infty$ or a subnormal underflow. The identical argument with
$p=10$ and $p=7$ gives the \texttt{fp16} and \texttt{bf16} cases.
\end{proof}

\begin{proof}[Proof of Proposition~\ref{prop:image}]
$I=R(\phi(\theta),\cdot)$ is differentiable in $\theta$ wherever the tile
assignment and the front-to-back ordering are fixed, which holds for
perturbations that do not change the set of pixels a primitive covers. A
first-order Taylor expansion gives
$\Delta I=\frac{\partial R}{\partial\phi}\phi'(\theta)\Delta\theta+O(\Delta\theta^2)$,
and taking the $\infty$-norm yields the stated bound. For $\phi=\exp$ we have
$\phi'(\theta)=e^{\theta}=s$, so $|\Delta s|=s\,|\Delta\theta|$ and the relative
change $|\Delta s|/s=|\Delta\theta|$ is independent of the exponent of $\theta$;
substituting the mantissa result of Lemma~\ref{lem:value} gives
$|\Delta s|/s=2^{b-p}$. A sign flip sends a trained log scale
$\theta\approx-3$ to $+3$, so $s\mapsto s\cdot e^{6}$.
\end{proof}

\begin{proof}[Proof of Theorem~\ref{thm:guard}]
(i) For a clean model every component lies in its own observed box $B_f$ by
definition of $B_f$ as the per-component min--max over that model, so the clamp
is the identity and no rendered pixel changes. (ii) After a single-bit upset only
one component $x$ of one primitive changes. If the corrupted value $x'$ lies
outside $B_f$, the guard maps it to the nearer endpoint of $B_f$, so the guarded
value lies in $B_f$ and differs from the clean value $x\in B_f$ by at most
$h_f-\ell_f$, the in-box spread; in particular every upset whose corrupted value
left $B_f$, which includes all sign flips that change magnitude order, all
overflowing exponent flips, and all non-finite results, is mapped back into the
trained support and cannot explode a primitive. If instead $x'\in B_f$, the guard
leaves it unchanged and the residual equals the original in-support perturbation,
which by Lemma~\ref{lem:value} is at most $2^{e+b-p}$ for a mantissa flip.
\end{proof}

\begin{proof}[Proof of Theorem~\ref{thm:complete}]
By construction the guard maps every stored component into its box $B_f$ before
rendering, so after guarding all parameters of every primitive lie in their boxes
regardless of how many bits were flipped. The projected screen extent of a
primitive is a monotone function of its scale $s=\exp(\theta_{\mathrm{scale}})$,
and $\theta_{\mathrm{scale}}\in B_{\mathrm{scale}}=[\ell,h]$ forces
$s\le\exp(h)=s_{\max}$, the largest scale present in the clean trained model.
Hence every primitive's footprint is bounded by that of the largest trained
primitive, which for a converged scene is a small fraction of the frame. No
choice of flipped bit can exceed this bound, since the bound depends only on
$B_{\mathrm{scale}}$ and not on the corrupted value. The single failure mode that
the redundancy of the representation cannot absorb is a primitive whose footprint
spans the frame, and that is now excluded.
\end{proof}

\begin{proof}[Proof of Corollary~\ref{cor:adv}]
A flip of any number of bits within a single primitive still yields, after the
guard, a primitive with all parameters in $B_f$, whose footprint is bounded by
that of the largest trained primitive (Theorem~\ref{thm:complete}); call this
bound $\phi$ as a fraction of the frame. An adversary spending $m$ flips touches
at most $m$ distinct primitives, and the union of their footprints is at most
$m\phi$ of the frame. To exceed any target fraction $\beta$ the adversary needs
$m\ge\beta/\phi$ flips, in contrast to the single flip that produces a
near-full-frame footprint without the guard.
\end{proof}

\begin{proof}[Proof of Lemma~\ref{lem:redundancy}]
Fix the scene as a bounded set of surfaces rendered to a frame of $P$ pixels, and
let primitive $i$ have projected footprint $F_i$, the set of pixels at which it
changes the rendered value by more than the $1/255$ threshold; write
$f_i=|F_i|/P\in[0,1]$ for its footprint fraction. For a converged model the
per-pixel overdraw is bounded: each pixel accumulates non-negligible contributions
from at most a constant number $\bar C$ of primitives before the front-to-back
transmittance saturates, so
\[
\sum_{i=1}^{N} f_i \;=\; \frac{1}{P}\sum_{\text{pixels }x}\#\{i:x\in F_i\}\;\le\;\bar C,
\]
a bound independent of $N$ (the \emph{overdraw bound}).

\emph{Per-upset error of a non-catastrophic upset.} A single-bit upset alters one
component of one primitive $i$. Call the upset \emph{non-catastrophic} if the
corrupted value remains inside the trained support box $B_f$ (equivalently, the
guard of Theorem~\ref{thm:complete} would not clamp it); let $G$ denote this event,
with $q=\Pr[G^{c}]$ under a uniform-random choice of primitive, field, and bit. By
the criticality measurement (Table~\ref{tab:criticality}), $q$ is a strictly
positive constant fixed by the fraction of magnitude-order sign and high-exponent
bits and is independent of $N$. On $G$, Proposition~\ref{prop:image} bounds the
perturbation pixelwise by a constant $\Delta_{\max}$ and, by the locality of
splatting with all other primitives unchanged, confines it to $F_i$. Hence the
frame-normalized squared error of an upset that strikes primitive $i$ on $G$
satisfies
\[
\varepsilon_i \;=\; \frac{1}{P}\sum_{x\in F_i}\lvert\Delta I(x)\rvert^{2}\;\le\; f_i\,\Delta_{\max}^{2},
\qquad\text{and}\qquad \varepsilon_i=\Theta(f_i)
\]
for the upsets whose per-pixel change is of the order of the bound. Up to constants
the per-upset error is therefore the footprint fraction $f_i$ of the struck
primitive, and a uniform-random upset selects $i$ uniformly over the $N$ primitives.

\emph{Median scaling.} Let $f_{(1)}\le\cdots\le f_{(N)}$ be the ordered footprint
fractions and $f_{\mathrm{med}}(N)=f_{(\lceil N/2\rceil)}$ their median. The
overdraw bound gives mean footprint $\frac1N\sum_i f_i\le \bar C/N$, so by Markov's
inequality at least half the primitives satisfy $f_i\le 2\bar C/N$, whence
$f_{\mathrm{med}}(N)\le 2\bar C/N=O(N^{-1})$. For the matching lower order,
densification refines the tiling: each new primitive's footprint is contained in
that of the parent it splits, so increasing the count from $N$ to $N'$ partitions
the same footprint mass among more primitives and the ordered footprints obey
$f_{(\lceil N/2\rceil)}=\Theta(N^{-\alpha})$ for an exponent $\alpha\in(0,1]$ set by
the size distribution of the tiling and the projection: $\alpha=1$ for a regular
refinement in which every primitive is split, and $\alpha<1$ when refinement
concentrates in part of the scene, so the median footprint shrinks more slowly than
the mean. In every case $\alpha>0$, because the largest footprint $f_{(N)}\to0$ as
the fixed surface area is partitioned among unboundedly many primitives. Combining
with the per-upset bound, the median error over non-catastrophic upsets is
$\sigma^{2}(N)=\Theta\!\big(f_{\mathrm{med}}(N)\big)=\Theta(N^{-\alpha})$; since $G$
has constant probability $1-q$, the same scaling holds for the median over all
upsets, which is the first claim. The exponent $\alpha$ is scene-dependent and is
measured in Section~\ref{sec:results-accum}.

\emph{The mean does not scale.} Write the mean frame-normalized squared error over
all upsets as
\[
\mathbb{E}[\varepsilon]=(1-q)\,\mathbb{E}[\varepsilon\mid G]+q\,\mathbb{E}[\varepsilon\mid G^{c}].
\]
The first term is at most $(1-q)\,\bar C\,\Delta_{\max}^{2}/N=O(N^{-1})$ by the
overdraw and per-upset bounds. For the second, a catastrophic upset is a
magnitude-order sign flip or an overflowing exponent flip of the logarithmic scale;
absent the guard it inflates one primitive to a frame-spanning sheet, so its
footprint fraction is $\Theta(1)$ \emph{independently of $N$}---exactly the failure
mode excluded by Theorem~\ref{thm:complete}---whence
$\mathbb{E}[\varepsilon\mid G^{c}]=\Theta(1)$. As $q$ is a positive constant,
\[
\mathbb{E}[\varepsilon]\;=\;O(N^{-1})+q\,\Theta(1)\;=\;\Theta(1),
\]
which is $N$-independent: the catastrophic tail pins the mean while the median
decays as $N^{-\alpha}$. Removing $G^{c}$ with the guard
(Theorem~\ref{thm:complete}) restores $\mathbb{E}[\varepsilon\mid G]=\Theta(N^{-\alpha})$,
the additive-law regime invoked in Theorem~\ref{thm:dose}.
\end{proof}

\begin{proof}[Proof of Theorem~\ref{thm:dose}]
Let $\Delta_i$ be the image perturbation of the $i$-th upset. The upsets are drawn
independently and hit distinct primitives with probability $1-O(k/N)$, so their
changed-pixel supports are disjoint in expectation and
$\mathbb{E}\langle\Delta_i,\Delta_j\rangle=0$ for $i\ne j$. Then
$\mathbb{E}\lVert\sum_i\Delta_i\rVert^2=\sum_i\mathbb{E}\lVert\Delta_i\rVert^2=k\,\sigma^2(N)$.
Setting $k\,\sigma^2(N)=\tau$ with $\sigma^2(N)=\Theta(N^{-\alpha})$ gives
$k_\tau=\Theta(N^{\alpha}\tau)$. Catastrophic upsets violate the disjoint-support
assumption, since one explosion covers the frame; the guard removes them by
Theorem~\ref{thm:complete}, restoring the additive law up to the in-support
residual.
\end{proof}

\begin{proof}[Proof of Corollary~\ref{cor:scrub}]
Bits flip independently at rate $\lambda$ over $b=\Theta(N)$ stored bits, so the
expected number of live upsets $M$ frames after a scrub is $\lambda b M$. Requiring
$\lambda b M\le k_\tau$ and substituting $k_\tau=\Theta(N^{\alpha}\tau)$ with
$b=\Theta(N)$ gives $M\le\Theta(N^{\alpha-1}\tau/\lambda)$. The guard is a
per-primitive map (Algorithm~\ref{alg:guard}) with no cross-node dependence, so on
$P$ nodes it costs $O(N/P)$ per invocation, and invoked once per $M$ frames its
amortised per-frame cost is $O(N/(PM))$.
\end{proof}

\section{Reproducibility}
\label{app:repro}

The ancillary archive contains the training script, the fault-injection engine,
the campaign and analysis code, the aggregated per-cell records, the multi-upset
records, the throughput measurements, the raw logs including the periodic
device-utilization trace, and the scripts that regenerate every figure and table
in this paper. The trained models and the several-million-row raw per-injection
records are large and are not shipped in the archive, but the code regenerates
them deterministically: training uses \texttt{gsplat} densification on the four
synthetic scenes and completes in minutes per scene, and the campaign parameters
(samples per cell, views per injection, precisions, and random seeds) are fixed
in the scripts and recorded in the logs. The complete artifact set, including the
trained models and the regenerating code, is hosted at
\url{https://huggingface.co/datasets/Lightcap/seu-3dgs}.

\end{document}